\DeclarePairedDelimiter{\ceil}{\lceil}{\rceil}
\let\epsilon\varepsilon
\let\eps\varepsilon
\newcommand{\NN}{\ensuremath{\mathbb{N}}}
\newcommand{\RR}{\ensuremath{\mathbb{R}}}
\newcommand{\FFF}{\mathcal{F}}
\newcommand{\WWW}{\mathcal{W}}
\newlength{\commentWidth}
\let\oldnl\nl
\newcommand{\nonl}{\renewcommand{\nl}{\let\nl\oldnl}}
\definecolor{orange}{RGB}{235,90,0}
\definecolor{darkorange}{RGB}{175,30,0}
\definecolor{turkis}{RGB}{131,182,182}
\definecolor{darkturkis}{RGB}{31,82,82}
\definecolor{green}{RGB}{102,180,0}
\definecolor{darkgreen}{RGB}{51,90,0}
\definecolor{myblue}{RGB}{0,0,213}
\definecolor{mydarkblue}{RGB}{0,0,100}
\definecolor{mybrightblue}{HTML}{74B0E4}
\definecolor{mybrighterblue}{HTML}{B3EAFA}
\definecolor{lila}{RGB}{102,0,102}
\definecolor{darkred}{RGB}{139,0,0}
\definecolor{darkyellow}{RGB}{188,135,2}
\definecolor{brightgray}{RGB}{200,200,200}
\definecolor{darkgray}{RGB}{50,50,50}
\definecolor{amaranth}{rgb}{0.9, 0.17, 0.31}
\definecolor{alizarin}{rgb}{0.82, 0.1, 0.26}
\definecolor{amber}{rgb}{1.0, 0.75, 0.0}
\definecolor{green(ryb)}{rgb}{0.4, 0.69, 0.2}
\definecolor{hanblue}{rgb}{0.27, 0.42, 0.81}
\definecolor{grannysmithapple}{rgb}{0.66, 0.89, 0.63}
\newtheorem{theorem}{Theorem}[section]
\newtheorem{lemma}[theorem]{Lemma}
\newtheorem{definition}[theorem]{Definition}
\newtheorem{corollary}[theorem]{Corollary}
\Crefname{lemma}{Lemma}{Lemmas}
\Crefname{theorem}{Theorem}{Theorems}
\newcommand{\GV}[1]{\textcolor{violet}{{\footnotesize #1}}\marginpar{\raggedright\tiny \textcolor{violet}{Giovanna}}}
\newcommand{\SF}[1]{\textcolor{blue}{{\footnotesize #1}}\marginpar{\raggedright\tiny \textcolor{blue}{Simone}}}
\newcommand{\BK}[1]{\textcolor{cyan}{{\footnotesize #1}}\marginpar{\raggedright\tiny \textcolor{cyan}{Bojana}}}
\newcommand{\distribution}{\mathcal{D}}
\newcommand{\set}[1]{\{#1\}}
\newcommand{\modulus}[1]{\vert #1 \vert}
\newcommand{\Prob}[2]{\Pr_{#2}\left[\,#1\,\right]}
\newcommand{\Ide}{I_{\mathcal{D}}(\eps)}
\newcommand{\instance}{\mathcal{I}}
\newcommand{\Gr}{\mathrm{Gr}}
\let\epsilon\varepsilon
\let\eps\varepsilon
\newcommand{\sample}{\ensuremath{\mathcal{S}}}
\newcommand{\fhg}{\frac{n^{1/3}}{62}}
\newcommand{\fdeg}{n-31n^{2/3}}
\newcommand{\epsfrac}{$\eps${\sc-FC}}
\title{$\bm{\eps}$-fractional core stability in Hedonic Games}
\date{November 2023}
\author{
Simone Fioravanti\textsuperscript{\rm 1} \quad 
Michele Flammini\textsuperscript{\rm 1, 2} \quad 
Bojana Kodric \textsuperscript{\rm 3} \quad
Giovanna Varricchio\textsuperscript{\rm 2,4}
\smallskip \\
\textsuperscript{\rm 1} Gran Sasso Science Institute (GSSI), L'Aquila, Italy\\
\textsuperscript{\rm 2} University of Calabria, Rende, Italy\\ 
\textsuperscript{\rm 3} Ca' Foscari University of Venice, Venice, Italy\\
\textsuperscript{\rm 4} Goethe-Universität, Frankfurt am Main, Germany \smallskip \\
\texttt{\{simone.fioravanti, michele.flammini\}@gssi.it}\\
\texttt{bojana.kodric@unive.it} \quad
\texttt{giovanna.varricchio@unical.it}
}
\begin{document}
\maketitle
\begin{abstract}
Hedonic Games (HGs) are a classical framework modeling coalition formation of strategic agents guided by their individual preferences. 
According to these preferences, it is desirable that a coalition structure (i.e.\ a partition of agents into coalitions) satisfies some form of stability. The most well-known and natural of such notions is arguably core-stability. Informally, a partition is core-stable if no subset of agents would like to deviate by regrouping in a so-called core-blocking coalition. Unfortunately, core-stable partitions seldom exist and even when they do, it is often computationally intractable to find one.
To circumvent these problems, we propose the notion of $\eps$-fractional core-stability, where at most an $\eps$-fraction of all possible coalitions is allowed to core-block. 
It turns out that such a relaxation may guarantee both existence and polynomial-time computation.
Specifically, we design efficient algorithms returning an $\eps$-fractional core-stable partition, with $\eps$ exponentially decreasing in the number of agents, for two fundamental classes of HGs: Simple Fractional and Anonymous.
From a probabilistic point of view, being the definition of $\eps$-fractional core equivalent to requiring that uniformly sampled coalitions core-block with probability lower than $\eps$, we further extend the definition to handle more complex sampling distributions. Along this line, when valuations have to be learned from samples in a PAC-learning fashion, we give positive and negative results on which distributions allow the efficient computation of outcomes that are $\epsilon$-fractional core-stable with arbitrarily high confidence. 
\end{abstract}

\section{Introduction}
Game-theoretic models of coalition formation have drawn significant interest in the last years, because of their ability to capture meaningful properties of multi-agent interactions. In Hedonic Games (HGs)~\cite{AzizS16, Dreze80} agents gather together without any form of externality, that is only minding the internal composition of their groups.
A solution is then a partition of the agents (or \emph{coalition structure}) having some desirable properties, which typically stand for stability against some kinds of deviations. Among the many notions existing in the literature, one of the most fundamental is core stability~\cite{Bogomolnaia02, Woeginger13}. A partition is said to be \emph{core-stable} or in the core if no subset of agents would benefit by regrouping and forming a so-called \emph{core-blocking} coalition.
Unfortunately, while being a quite natural requirement, it is notably very difficult to achieve~\cite{PetersE15, Ballester04}, even under the usual unrealistic assumption of full knowledge of agents' preferences. Furthermore, for the few classes of HGs in which the non-emptiness of the core has been established, a stable partition is usually hard to compute.
Nonetheless, due to its significance, it is still desirable to find an approximation of the core.

The core was first considered in the setting of cooperative game theory, where the value of a coalition has to be allocated fairly between its members. In this scenario, the most well-known approximation to the core is the so-called strong $\eps$-core~\cite{Shapley66}, in which a blocking coalition increases the total value allocated to its members by at least $\eps$. 
A derived notion is the one of \emph{least}-core~\cite{MaschlerPS79}, i.e. the (strong) $\eps$-core associated to the smallest possible value of $\eps$ guaranteeing its existence.
An adaptation of these concepts to the context of HGs has been proposed in~\cite{Monaco21}, where the authors define \emph{$k$-improvement} core stability, requiring each member of a blocking coalition to increase her utility by a multiplicative factor strictly greater than $k\ge 1$. The same authors also propose to bound by a value $q\ge 2$ the number of agents allowed to form a blocking coalition, obtaining what they call \emph{$q$-size} core stability.

However, the above stability concepts might still be fragile, especially when many coalitions could anyway benefit by deviating, even if not consistently according to the approximation/size factors. In fact, agents might be still inclined to subtle improvements. Interestingly, works like~\cite{CollinsEK22} experimentally show that, sampling random HGs, the fraction of instances with an empty core is small and decreases significantly as the number of agents increases. This could be related to the fact that, even in the instances with empty core, many or almost all the coalitions do not core-block. Consequently, the existence of outcomes having a relatively small number of blocking coalitions appears plausible.

In this work, we investigate the concept of $\eps$-fractional core-stable partitions, i.e. partitions that can be core-blocked by only an $\eps$-fraction of all possible coalitions.
This notion 
bears some similarities with PAC-stabilizability as defined in~\cite{SliwinskiZ17}. In this work, the authors lifted for the first time the assumption of complete information in HGs and considered learning preferences and core-stable partitions from samples, employing the \emph{probably approximately correct (PAC) learning} framework~\cite{Valiant84}.
Specifically, an HGs class is PAC stabilizable if, after seeing a certain number of samples, it is possible to either determine that the core is empty or to return a partition that has probability at most $\eps$ to be core-blocked by a coalition sampled from the same distribution.
%
\paragraph{Our Contribution.}
In this paper, we introduce and study the $\eps$-fractional core stability notion in Hedonic Games. We specifically investigate this concept on two of the most fundamental classes of HGs:  \emph{Simple fractional} and \emph{anonymous} HGs.

Roughly speaking a partition of agents is $\eps$-fractional core-stable, \epsfrac\ in short, if at most an $\eps$ fraction of coalitions may core-block it.
Such a definition has a natural probabilistic interpretation: Indeed, for an \epsfrac\ partition,  $\eps$ represents an upper bound to the probability of drawing a core-blocking coalition uniformly at random. Along this line, we broaden the definition to any distribution over coalitions by requiring that the probability of sampling a blocking coalition is at most $\eps$.
Worth noticing, if $\eps=0$, we are essentially requiring our solution to be core stable. Hence it is not possible in general to prove the existence and efficiently compute \epsfrac\ outcomes, for values of $\eps$ that are sufficiently close to $0$. 
In contrast, our aim is to efficiently compute \epsfrac\ solutions for as small as possible values of $\eps$, proving in turn also their existence. 

Unfortunately, as a first result, we prove that by allowing arbitrary sampling distributions an \epsfrac\ may fail to exist for constant values of $\eps$.
On the positive side, for the aforementioned classes of HGs, we show that it is possible to efficiently compute an \epsfrac\ solution under the uniform distribution, with $\eps$ sub-exponentially small in the number of agents. 
In particular, in the case of anonymous HGs, we present an algorithm computing an \epsfrac\ solution under the class of \emph{$\lambda$-bounded} distributions, where the ratio of the probabilities of extracting any two coalitions is bounded by the parameter $\lambda$. Notably, this class includes the uniform distribution as the special case $\lambda=1$, and can be considered a suitable extension of it.
Our algorithms, besides guaranteeing sub-exponentially
small values of $\eps$, are designed to handle the possible incomplete knowledge of agents' preferences. In fact, in case preferences are unknown, the algorithms can use the sampling distribution to learn them in a PAC-learning fashion while maintaining the very same guarantees on $\eps$ with high confidence. 

\section{Related Work}
\paragraph{Core stability and Hedonic Games.}
Hedonic Games have captured considerable research attention from the scientific community over the years. One of the main goals in their study is understanding how to gather agents in a way they won't desire to modify the outcome. For this reason, several stability notions have been considered and studied such as core and Nash stability, or individual rationality. We refer to~\cite{AzizS16} for a comprehensive overview of the subject. Core stability is a fundamental concept in multi-agent systems, first considered in the context of cooperative games~\cite{Gillies59, DengP94}. Its properties and the related complexity have been largely investigated in HGs~\cite{PetersE15,Bogomolnaia02,SungD07, Aziz_BottomResp} and beyond, such as in house allocation~\cite{Miyagawa02a}, markets~\cite{BatziouBF22, BichlerW22} and many other settings. 
Recently~\citet{Donahue21nips, Donahue21aaai} have modeled federated learning as an HG, where agents evaluate federating coalitions according to the expected mean squared error of the model they obtain by sharing data with the coalition's members. Works like~\cite{CornelisseRBMK22, BalkanskiSV17} have used core stability to study payoff allocation among team members in collaborative multi-agent settings.

\paragraph{PAC-stabilizability.}
Our definition of $\eps$-fractional core stability is also strictly related to PAC stabilizability as defined in~\cite{SliwinskiZ17}.
This notion was further investigated in several papers. \citet{IgarashiSZ19} studied the case of HGs with underlying interaction networks. \citet{JhaZick20} defined a general framework for learning game-theoretic solution concepts from samples. \citet{HGnoisy} considered learning and stabilizing HGs with noisy preferences.
Recently, \citet{nostro_paper} proposed to relax the requirements of PAC-stabilizability by considering only restricted distributions and showed stabilizability of $\WWW$-games under $\lambda$-bounded distributions.

\section{Preliminaries}\label{sec:preliminaries}
In this section, we present the fundamental definitions for our work. Given a positive integer $k$ we use $[k]$ to denote the set $\set{1, \dots, k}$.
\subsection{Hedonic Games}
Let $N$ be a set of $n$ \emph{agents}. We call any non-empty subset $C\subseteq N$ a \emph{coalition} and any coalition of size one a \emph{singleton}.
We denote by $\succsim_i$ any binary \emph{preference relation} of agent $i$ over all coalitions containing $i$, which is reflexive, transitive, and complete.
A Hedonic Game is then a pair $H=( N, \succsim )$, where $\succsim = (\succsim_i, \ldots, \succsim_n)$ is a \emph{preference profile}, i.e., the collection of all agents' preferences.

Throughout this work, we will assume that preferences are expressed as real numbers by means of \emph{valuation} functions $v_i:2^N \rightarrow \RR$ for each $i\in N$. In other words, given two coalitions $C,C'$ containing agent $i$, $v_i(C) \geq v_i(C')$ if and only if $C\succsim_i C'$. We will denote by $\vec{v}=(v_1,\ldots, v_n)$ the collections of agents' valuations and assume that $v_i$ is not defined for $C$ not containing $i$ and write $v_i(C)=\varnothing$.
A {\em coalition structure} $\pi$ is a partition of agents into coalitions and $\pi(i)$ denotes the coalition $i$ is assigned to.
We write $v_i(\pi) = v_i(\pi(i))$ to denote the utility $i$ gets in her assigned coalition $\pi(i)$ inside $\pi$.

With this paper, we aim to relax the concept of core stability defined as follows.
\begin{definition}
Given a coalition structure $\pi$, a coalition $C\subseteq N$ is said to \emph{core-block} $\pi$ if, for each $i \in C$, $v_i(C) > v_i(\pi)$.
A coalition structure $\pi$ is said to be \emph{core-stable} if no coalition $C \subseteq N$ core-blocks it.  
\end{definition}

\paragraph{Simple fractional Hedonic Games.}
In \emph{fractional} HGs (FHGs), first introduced in~\cite{Aziz14}, every agent $i \in N$ assigns a value $v_i(j)$ to any other $j\neq i$, and then her evaluation for any coalition $C \ni i$ is the average value ascribed to the members of $C\setminus\set{i}$. Formally: $v_i(C) = \frac{\sum_{j\in C\setminus\set{i}} v_i(j)}{\modulus{C}}$. A FHG is said to be \emph{simple} if $v_i(j) \in \{0,1\}$, for each $i,j \in N, i\neq j$.
A natural representation of these games is a directed and unweighted graph $G=(V, E)$, where $V=N$ and $(i,j) \in E$ if and only if $v_i(j)=1$. Despite their name,~\citet{Aziz14} show that simple FHGs capture the complexity of the entire class w.r.t.\ core-stability: In fact, deciding if a core-stable partition exists is $\Sigma^p_2$-complete.

\paragraph{Anonymous Hedonic Games.}
An HG is said to satisfy \emph{anonimity}~\cite{Banerjee01, Bogomolnaia02} if agents evaluate coalitions on the sole basis of their size, i.e.,\ $v_i(C) = v_i(C')$ for any $i \in N$ and any $C,C'$ containing $i$ such that $|C|=|C'|$.
When considering anonymous HGs we will assume $v_i:[n] \rightarrow \RR$.
An anonymous HGs instance is said to be \emph{single-peaked} if there exists a permutation $(s_1, \ldots, s_n)$ of $\{1, \ldots, n\}$ for which every agent $i\in N$ admits a \emph{peak} $p(i)\in [n]$ such that $h<k \leq p(i)$ or $h>k\geq p(i)$ imply 
$v_i(s_k) \geq v_i(s_h)$. 
Roughly speaking, the higher the distance from the peak in the given ordering, the lower the valuation for the coalition size.
If the permutation is the identity function, i.e.\ the ordering is the usual one over $\NN$,  we say that the preference is \emph{single-peaked in the natural ordering}.
For anonymous HGs, deciding if a core-stable partition exists has been shown to be NP-complete~\cite{Ballester04}.

\subsection{Epsilon-core, learning, and computation efficiency}
Here we formalize our notion of $\eps$-fractional core stability.
\begin{definition}\label{def:unif_efc}
Given a parameter $\eps\in[0,1]$, we say that a partition $\pi$ is \emph{$\epsilon$-fractional core-stable}, \epsfrac\ in short, if it holds that at most an $\eps$-fraction of all possible coalitions are core-blocking for partition $\pi$, i.e.,
\[
\frac{\# \text{ of core-blocking coalitions for }\pi}{\# \text{ of all possible coalitions}}<\eps \; .
\]
\end{definition}
 
This definition has a natural probabilistic interpretation: A partition is \epsfrac\ if, by sampling u.a.r.\ a coalition, the probability of sampling a core-blocking one is at most $\eps$. Such an interpretation inspired the following extension.

\begin{definition}
Given a parameter $\epsilon\in[0,1]$, we say that a partition $\pi$ is \emph{$\epsilon$-fractional core-stable} with respect to a distribution $\distribution$ over $2^N$ if 
\[
\Pr_{C\sim \distribution}[C \text{ core blocking for } \pi]<\eps \; .
\]
\end{definition}

We may assume that agents' preferences are unknown and we need to learn them by sampling coalitions from a distribution $\distribution$. Consequently, our algorithms will have a {\em learning phase}, where preferences are learned by observing $m$ samples, and a {\em computation phase}, where a stable outcome is computed upon the learned preferences. 
We say that an algorithm {\em exactly learns} a family $\mathcal{T} \subseteq 2^N$ if after seeing a sample $\mathcal{S}$ it is able to determine the real valuation of any agent for every coalition in $\mathcal{T}$. Note that this does not necessarily mean that $\mathcal{T} \subseteq \mathcal{S}$; instead, by knowing the properties of the considered HG class, it must be possible to derive complete information of $\mathcal{T}$ from $\mathcal{S}$. As an example, consider the anonymous HGs class: If $\mathcal{T}$ is the family of coalitions of size $s$, in order to exactly learn it, for each agent $i$ there must exist $S\in \mathcal{S}$ such that $i\in S$ and $\modulus{S}=s$.

We aim to design polynomial-time algorithms for computing \epsfrac\ solutions. However, to retrieve
enough information on the preferences we may require a large number of samples.
Hence, we say that an algorithm is {\em efficient} if its computation phase requires polynomial time in the instance size while the learning phase is polynomial in $n, 1/\eps, \log 1/\delta$, where $\delta$ is a confidence parameter. Clearly, if the valuations are known in advance, an efficient algorithm requires polynomial time. Moreover, our algorithms will compute an \epsfrac\ partition with confidence $1-\delta$ and the solution will turn out to be exact as soon as the true agents' preferences are given as input.

\subsection{Chernoff Bound and \texorpdfstring{$\lambda$}{lambda}-bounded distributions}
The analysis of our algorithms is mainly probabilistic and will strongly rely on the classical Chernoff bound (see, e.g., Chapter~4 in~\cite{bookProbComp}) that we summarize hereafter.
Let $X=\sum_{i=1}^n X_i$ be a sum of independent Poisson trials of mean $\mu$, then for any constant $b\in(0,1)$ it holds:
\begin{align}\label{eq:chernoff}
    \Prob{X\ge (1+b)\mu}{} \le e^{-\mu b^2/3}
    && \text{ and } &&
    \Prob{X\ge (1-b)\mu}{} \le e^{-\mu b^2/2} \; .
\end{align}

As we already mentioned, we will study the $\eps$-fractional core stability subject to distributions $\distribution$ over $2^N$. We will see that, while for general distributions it is not possible to guarantee good enough values of $\eps$, it is indeed possible for \emph{$\lambda$-bounded} distributions.
This class of distributions has been first introduced in a learning context by~\cite{BartlettW91}, where they consider general continuous distributions over bounded subsets of $\RR^d$. The idea is that a distribution in this class has constraints (parameterized by the value $\lambda$) on how much the probability density function can vary. 
\begin{definition}\label{def:bounded_definition}
A distribution $\distribution$ over $2^N$ is said to be $\lambda$-\emph{bounded}, if there exists $\lambda \geq 1$ such that, for every two coalitions $C_1, C_2$, it holds that
\[
\Prob{C=C_1}{C\sim\distribution}\leq \lambda \Prob{C=C_2}{C\sim\distribution} \, .
\]
\end{definition}
A straightforward consequence of this definition is that no coalition has null probability of being sampled. 
Moreover, it can be noted that, while setting $\lambda=1$ we obtain the uniform distribution over $2^N$, as $\lambda \rightarrow +\infty$ every distribution can be considered $\lambda$-bounded up to an approximation factor. Thus, in order to keep the original intended purpose of this definition, in the rest of this work we will consider $\lambda$ to be constant with respect to $n$.

The following lemma, an adaptation of Lemma 4 in~\cite{BartlettW91}, will be useful in our computations.
\begin{restatable}{lemma}{LemmaBounded}\label{lem:bartlett}
    Let distribution $\distribution$ be $\lambda$-bounded. Let $\FFF \subseteq \mathcal{P}(2^N)$ be a family of subsets and let $a=\modulus{\FFF}/2^n$. Then, the following inequalities hold:
     \begin{equation*}\label{eq:bounded_formula}
     \frac{a}{a + \lambda(1-a)} \le \Prob{C \in \FFF}{C \sim \distribution} \le \frac{\lambda a}{\lambda a+ 1 -a} \, .
     \end{equation*}
\end{restatable}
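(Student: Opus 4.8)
The plan is to reduce the statement to an elementary estimate on the probability masses. Write $p_S := \Prob{C=S}{C\sim\distribution}$ for $S\in 2^N$, put $f := \modulus{\FFF}$ so that $a = f/2^n$, and let $\alpha := \Prob{C\in\FFF}{C\sim\distribution} = \sum_{S\in\FFF} p_S$. The two degenerate cases $a=0$ (so $\FFF=\emptyset$ and $\alpha=0$) and $a=1$ (so $\FFF = 2^N$ and $\alpha=1$) make both displayed bounds trivially tight, so I may assume $0<f<2^n$; in particular both $\FFF$ and its complement $\FFF^c := 2^N\setminus\FFF$ are non-empty, which is what makes the averages below well-defined.

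The single structural observation needed is that $\lambda$-boundedness forces $\max_{S\in\FFF} p_S \le \lambda \min_{S\in\FFF^c} p_S$, and symmetrically $\max_{S\in\FFF^c} p_S \le \lambda \min_{S\in\FFF} p_S$. Since the arithmetic mean of a finite set of reals lies between its minimum and its maximum, the first inequality upgrades to a statement about averages,
\[
\frac{\alpha}{f} \;=\; \frac{1}{f}\sum_{S\in\FFF} p_S \;\le\; \lambda\cdot\frac{1}{2^n-f}\sum_{S\in\FFF^c} p_S \;=\; \lambda\cdot\frac{1-\alpha}{2^n-f},
\]
and the second gives the same inequality with the roles of $\FFF$ and $\FFF^c$ (equivalently of $\alpha$ and $1-\alpha$, and of $f$ and $2^n-f$) exchanged.

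From here it is pure algebra. The displayed inequality rearranges to $\alpha(2^n-f) \le \lambda f(1-\alpha)$; dividing through by $2^n$ and collecting the $\alpha$ terms yields $\alpha\bigl((1-a)+\lambda a\bigr) \le \lambda a$, i.e.\ the claimed upper bound $\alpha \le \tfrac{\lambda a}{\lambda a + 1 - a}$. The mirrored inequality $(1-\alpha)f \le \lambda\alpha(2^n-f)$ rearranges to $a \le \alpha\bigl(a+\lambda(1-a)\bigr)$, i.e.\ the lower bound $\alpha \ge \tfrac{a}{a+\lambda(1-a)}$. I expect no genuine obstacle here: the only point requiring care is the non-emptiness of $\FFF^c$ (hence the need to peel off the cases $a\in\{0,1\}$ first), and everything else is bookkeeping. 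As a sanity check, both bounds are attained by the extremal $\lambda$-bounded distribution assigning mass $\lambda m$ to each set in $\FFF$ and $m$ to each set outside (resp.\ the reverse), but the averaging argument avoids having to justify this extremality.
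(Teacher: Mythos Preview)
Your proof is correct. The averaging step is clean: from $\lambda$-boundedness you get $\max_{S\in\FFF}p_S\le\lambda\min_{S\in\FFF^c}p_S$, sandwich the averages between the corresponding extrema, and the algebra goes through exactly as you wrote; the edge cases $a\in\{0,1\}$ are handled appropriately.

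As for comparison: the paper does not actually supply a proof of this lemma. It only states the result and cites it as ``an adaptation of Lemma 4 in~\cite{BartlettW91}'', so there is no in-paper argument to compare against. Your self-contained derivation is precisely the kind of elementary proof one would expect for such an adaptation, and it stands on its own.
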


\section{Impossibility results}\label{sec:impossibility}
In this section, we explore the boundaries of feasible $\eps$ for simple fractional and anonymous HGs 
according to general, $\lambda$-bounded, and uniform distributions.

\paragraph{Simple fractional Hedonic Games.}
We start by showing that when dealing with arbitrary distributions, an \epsfrac\ may not exist for $\eps$ exponentially small w.r.t.\ the number of agents. Specifically, this impossibility result holds for constant values of $\eps$. The informal intuition is that, being the distribution arbitrary, one may choose it in an adversarial way with respect to a partition having an empty core. 

\begin{restatable}{proposition}{impossibilityGeneralFractional}\label{prop:fracnoEpsFract}
There exists a distribution $\distribution$ and a simple fractional HG instance such that no $\eps$-fractional core-stable solution w.r.t.\ $\distribution$ exists for  $\varepsilon\leq1/2^{40}$.
\end{restatable}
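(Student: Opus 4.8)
The plan is to combine an instance whose core is empty with a distribution chosen adversarially against it. Empty-core simple fractional HGs exist: this already follows from the fact that deciding core non-emptiness in this class is $\Sigma^p_2$-complete~\cite{Aziz14}, but for a quantitative statement one needs a concrete such instance, so I would take an explicitly constructed (or known) simple fractional HG $\instance$ whose core is empty, on a set $N$ of $n\le 40$ agents, and let $\distribution$ be the uniform distribution over the $2^n-1$ non-empty subsets of $N$. The argument then rests on three observations: emptiness of the core forces \emph{every} coalition structure to admit a core-blocking coalition; uniformity of $\distribution$ converts ``at least one blocking coalition'' into a lower bound of $1/(2^n-1)$ on the probability of sampling one; and $n\le 40$ makes this lower bound strictly larger than $1/2^{40}$.

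Carrying this out is short. Let $\pi$ be an arbitrary coalition structure on $N$. Since the core of $\instance$ is empty, $\pi$ is not core-stable, hence some coalition $S\subseteq N$ core-blocks $\pi$, and therefore
\[
\Prob{C\text{ core blocking for }\pi}{C\sim\distribution}\ \ge\ \frac{1}{2^{n}-1}\ \ge\ \frac{1}{2^{40}-1}\ >\ \frac{1}{2^{40}}\ \ge\ \eps
\]
whenever $\eps\le 1/2^{40}$. By the definition of $\eps$-fractional core stability with respect to a distribution, $\pi$ fails to be \epsfrac\ w.r.t.\ $\distribution$; as $\pi$ was arbitrary, no \epsfrac\ solution w.r.t.\ $\distribution$ exists, which is exactly the claim.

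The main obstacle is the very first step: $\Sigma^p_2$-completeness only yields the \emph{existence} of empty-core instances, with no control on their size, so one genuinely has to produce a specific simple fractional HG with empty core on at most $40$ agents — and this is where the constant $1/2^{40}$ comes from. If one only had a larger empty-core instance, on $n_0>40$ agents, the uniform distribution would no longer be adversarial enough; the natural substitute is to enumerate the finitely many coalition structures of $\instance$, pick one core-blocking coalition per structure, and take $\distribution$ uniform over the resulting collection $\mathcal{B}$, which gives blocking probability at least $1/\modulus{\mathcal{B}}$ for every structure. Bounding $\modulus{\mathcal{B}}$ then becomes the crux and, unlike the uniform argument, requires exploiting the structure of the chosen instance. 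In any case, the proposition already rules out any constant $\eps$ under arbitrary distributions, which is precisely what motivates restricting attention to uniform and, more generally, $\lambda$-bounded distributions in the algorithmic part of the paper.
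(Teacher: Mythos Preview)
Your argument is correct and rests on the same key ingredient as the paper: the explicit $40$-agent simple FHG with empty core from~\cite{Aziz14}. Given that instance, your uniform-over-all-coalitions distribution immediately yields blocking probability at least $1/(2^{40}-1)>1/2^{40}$ for every partition, which is exactly what the proposition asks.

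The paper takes one extra step that you do not: it \emph{extends} the $40$-agent instance $\instance$ to an instance $\instance'$ on an arbitrary number $n\ge 40$ of agents, by adding agents $41,\ldots,n$ who value each other $1$ and the original agents $0$ (and are valued $0$ by them). The point is that every partition of $\instance'$ is still blocked by some coalition in the fixed set $2^{[40]}\cup\{\{41,\ldots,n\}\}\setminus\{\emptyset\}$ of size $2^{40}$, so the uniform distribution over that set again gives blocking probability $>1/2^{40}$. For the proposition as stated (a pure existence claim) your simpler proof suffices; the paper's extension, however, is what powers the subsequent corollary on $\lambda$-bounded distributions, whose bound $\frac{\lambda}{2^{40}(\lambda-1)+2^n}$ explicitly involves $n$ and would be vacuous on a fixed $40$-agent instance.
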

\begin{proof}
Simple fractional HGs have been shown to have an empty core~\cite{Aziz14}. 
In particular, the authors provide an instance $\instance$ with $40$ agents having an empty core. We can extend this instance to an instance $\instance'$, with $N'=[n]$ being the set of agents, still having an empty core. Let $N=\set{1,\dots, 40}$ be the set of agents in $\instance$. Their mutual preferences remain the same, while they value $0$ all the other agents in $N'\setminus N$. In turn, the agents in $N'\setminus N$ have mutual preferences equal to $1$, and they value $0$ all the agents in $N$. 
$\instance'$ has an empty core and, in particular, for any coalition structure $\pi$ there exists a core blocking coalition in $2^{N}\cup\set{\set{41, \dots, n}}\setminus\set{\emptyset}$. In fact, on the one hand, the agents in $\set{41, \dots, n}$ will form a blocking coalition whenever we return a partition where they are not in the same coalition. On the other hand, if $\set{41, \dots, n}$ is a coalition of the considered partition, no matter how the agents in $N$ will be partitioned, there exists a blocking coalition in $2^N \setminus\set{\emptyset}$ because the instance $\instance$ has empty core.
In conclusion, by choosing $\distribution$ as the uniform distribution over $2^{N'}\cup\set{\set{41, \dots, n}} \setminus\set{\emptyset}$, for any coalition structure $\pi$ we have $\Pr_{C\sim \distribution}[C \text{ blocking for } \pi]> 1/2^{40}$ and the thesis follows.
\end{proof}

Similarly, by applying Lemma~\ref{lem:bartlett}, we can easily derive the following generalization.

\begin{restatable}{corollary}{impossibilityBoundedFractional}
Given a parameter $\lambda$ there exists a bounded distribution with parameter $\lambda$ such that in simple fractional HGs no $\epsilon$-core exists for $\epsilon < \frac{\lambda }{2^{40}(\lambda -1)+ 2^n}$.
\end{restatable}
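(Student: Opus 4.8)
The plan is to reuse the empty-core instance $\instance'$ from the proof of Proposition~\ref{prop:fracnoEpsFract} and simply replace the uniform distribution on $2^{N'}\cup\set{\set{41,\dots,n}}\setminus\set{\emptyset}$ with a $\lambda$-bounded one that still assigns enough mass to that same family of coalitions. Concretely, let $\FFF = 2^{N}\cup\set{\set{41,\dots,n}}\setminus\set{\emptyset}$ be the family of coalitions that, by the argument in Proposition~\ref{prop:fracnoEpsFract}, always contains a core-blocking coalition for every partition $\pi$ of $\instance'$. We have $\modulus{\FFF} = 2^{40}-1 + 1 = 2^{40}$, so $a := \modulus{\FFF}/2^n = 2^{40}/2^n = 2^{40-n}$.

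First I would invoke Lemma~\ref{lem:bartlett}: for any $\lambda$-bounded distribution $\distribution$ over $2^N$ (here $N = N'$, so $2^N$ has $2^n$ elements, or $2^n-1$ if one excludes $\emptyset$ — the constant won't matter), the lower bound gives
\[
\Prob{C\in\FFF}{C\sim\distribution}\ \ge\ \frac{a}{a+\lambda(1-a)}\,.
\]
Since $\FFF$ contains a core-blocking coalition for every $\pi$, any partition $\pi$ satisfies $\Prob{C\text{ blocking for }\pi}{C\sim\distribution}\ge \Prob{C\in\FFF}{C\sim\distribution}$ — wait, that inequality goes the wrong way unless $\FFF$ is exactly the blocking family; what is actually true is that the set of blocking coalitions for $\pi$ has nonempty intersection with $\FFF$, so I cannot directly lower bound the blocking probability by the $\FFF$-probability. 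The fix: observe that in the construction of $\instance'$, the blocking coalitions are always \emph{within} $\FFF$ in the relevant sense needed, or more simply, restrict $\distribution$ to be supported exactly on $\FFF$ (a distribution supported on $\FFF$ is still $\lambda$-bounded as a distribution over $2^N$ only if we allow zero probabilities, which contradicts $\lambda$-boundedness). So the cleanest route is: take $\distribution$ to be the $\lambda$-bounded distribution that puts probability mass as concentrated as possible on $\FFF$ — i.e. every coalition in $\FFF$ has probability $p$ and every coalition outside has probability $p/\lambda$ — and then a blocking coalition for $\pi$, which exists in $\FFF$, is sampled with probability at least $p \ge \frac{1}{\modulus{\FFF}+ (2^n-\modulus{\FFF})/\lambda} = \frac{\lambda}{\lambda\modulus{\FFF}+2^n-\modulus{\FFF}} = \frac{\lambda}{2^{40}(\lambda-1)+2^n}$, which is exactly the claimed bound. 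Hence no $\eps$-core exists for $\eps$ below this value.

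The main obstacle I anticipate is the one I just flagged: making sure the bound one writes down is a bound on the probability of sampling a \emph{blocking} coalition (which is partition-dependent) and not merely on $\Prob{C\in\FFF}{}$. The resolution is the observation used already in Proposition~\ref{prop:fracnoEpsFract} — for every $\pi$ there is at least \emph{one} blocking coalition lying in the fixed family $\FFF$ — combined with the fact that under the extremal $\lambda$-bounded distribution described above every single element of $\FFF$ individually carries probability at least $\frac{\lambda}{2^{40}(\lambda-1)+2^n}$. One should double-check the arithmetic $\modulus{\FFF}=2^{40}$ (the $2^{40}-1$ nonempty subsets of $N$ plus the singleton coalition $\set{41,\dots,n}$, which is disjoint from all of them) and verify that the extremal distribution is genuinely $\lambda$-bounded (the ratio between any two coalition probabilities is either $1$, $\lambda$, or $1/\lambda$, all within $[1/\lambda,\lambda]$, so it is), and that it is a valid probability distribution (normalize; the normalization constant is exactly $\modulus{\FFF}+(2^n-\modulus{\FFF})/\lambda$, giving $p = 1/(\modulus{\FFF}+(2^n-\modulus{\FFF})/\lambda)$ as above). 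Sanity check: as $\lambda\to 1$ this gives $\eps < 1/2^n$, the uniform bound over all nonempty coalitions, and for large $\lambda$ it approaches $1/2^{40}$, recovering Proposition~\ref{prop:fracnoEpsFract}; both limits are consistent, so the statement is the right generalization.
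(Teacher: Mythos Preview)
Your proposal is correct and, after your self-correction, follows essentially the same approach as the paper: construct the extremal $\lambda$-bounded distribution that assigns probability $p$ to each coalition in $\FFF$ and $p/\lambda$ elsewhere, compute $p=\frac{\lambda}{2^{40}(\lambda-1)+2^n}$, and use that every partition has at least one blocking coalition in $\FFF$. The initial detour through Lemma~\ref{lem:bartlett} was indeed a false start (as you noticed, it would only lower-bound $\Prob{C\in\FFF}{}$, not the blocking probability), but you recovered the right argument.
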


\paragraph{Anonymous Hedonic Games.}
In the following, we consider single-peaked anonymous HGs. Clearly, all the provided results hold for the more general class of anonymous HGs. Following the same approach of the previous section, knowing that for anonymous HGs there exists an instance with seven agents and single-peaked preferences having empty core~\cite{Banerjee01},  we can show the following.

\begin{restatable}{proposition}{anonySPnoEpsFract}\label{prop:anonySPnoEpsFract}
There exists a distribution $\distribution$ and an anonymous (single-peaked) HG instance such that for every $\varepsilon\leq 1/2^7$ there is no $\varepsilon$-fractional core w.r.t.\ $\distribution$.
\end{restatable}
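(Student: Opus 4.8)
The plan is to mimic the proof of Proposition~\ref{prop:fracnoEpsFract} almost verbatim, substituting the $7$-agent single-peaked instance of~\cite{Banerjee01} for the $40$-agent simple fractional instance. First I would fix $H$ to be the anonymous single-peaked HG on $N=\{1,\dots,7\}$ with empty core guaranteed by~\cite{Banerjee01}, and then extend it to $n$ agents. The extension must preserve single-peakedness and, crucially, must force the "extra" agents $\{8,\dots,n\}$ to all be grouped together in any candidate core-stable partition (otherwise a small deviating coalition among them blocks). Concretely I would give every agent $i\in\{8,\dots,n\}$ a peak at $n-7$ (the size of the block $\{8,\dots,n\}$), or more safely at some size that is strictly preferred to every other size, so that: (i) if $\{8,\dots,n\}$ is not a coalition of $\pi$, some sub-coalition of the extra agents of their preferred size core-blocks; (ii) if $\{8,\dots,n\}$ \emph{is} a coalition of $\pi$, then the induced partition on $\{1,\dots,7\}$ must still leave a core-blocking coalition inside $2^{\{1,\dots,7\}}\setminus\{\emptyset\}$ by emptiness of the core of $H$. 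One subtlety specific to anonymity is that agents in $N$ evaluate coalitions only by size, so I should check that the interaction across the two groups is harmless — e.g.\ ensure the preferences of the original $7$ agents over sizes $1,\dots,7$ still produce the empty-core behaviour and that no mixed coalition of size $\le 7$ drawn from the support can accidentally satisfy everyone; restricting the support of $\distribution$ appropriately (see below) sidesteps most of this.

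The second step is to choose the adversarial distribution $\distribution$ supported on exactly those coalitions that the argument uses: $\distribution$ is uniform over the set $\mathcal{C}=\big(2^{\{1,\dots,7\}}\setminus\{\emptyset\}\big)\cup\{S\cup T : S\subseteq\{1,\dots,7\},\ T\ \text{a sub-coalition of}\ \{8,\dots,n\}\ \text{of the extra agents' preferred size}\}$, or, to keep the counting transparent, simply uniform over $\big(2^{\{1,\dots,7\}}\setminus\{\emptyset\}\big)\cup\{\{8,\dots,n\}\}$, exactly paralleling the fractional case. With the latter choice $|\mathrm{supp}(\distribution)|=2^{7}-1+1=2^{7}$, so every coalition in the support has mass $1/2^{7}$, and by Steps~1 the argument shows that for every $\pi$ there is at least one core-blocking coalition in the support; hence $\Pr_{C\sim\distribution}[C\ \text{core-blocks}\ \pi]\ge 1/2^{7}$, contradicting any claimed $\eps\le 1/2^{7}$. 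I would state the bound as "$\ge 1/2^7$" and phrase the proposition's conclusion as "no $\eps$-fractional core for $\eps\le 1/2^7$", matching the given statement.

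The main obstacle I anticipate is the same as the one glossed over in the fractional proof but slightly more delicate here because anonymity makes it harder to isolate the two agent groups: I must verify that extending to $n$ agents genuinely preserves an empty core \emph{and} that the forced block $\{8,\dots,n\}$ behaves as claimed regardless of how $\pi$ splits $\{1,\dots,7\}$. The cleanest way to handle this is a short case analysis exactly as in Proposition~\ref{prop:fracnoEpsFract}: either $\{8,\dots,n\}$ (or the relevant preferred-size sub-coalition of extra agents) core-blocks, or it does not, in which case the extra agents are "happy" and the conflict is confined to $\{1,\dots,7\}$ where emptiness of the core of $H$ finishes the job. If the specific peaks of the $7$-agent instance interact awkwardly with the added block sizes, a fallback is to take the extra agents to form a \emph{single-peaked} preference that is strictly maximized at size exactly $n-7$ and strictly decreasing away from it, which makes claim~(i) immediate and leaves~(ii) untouched. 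I would also double-check that the whole extended instance remains single-peaked in a common ordering — taking the natural ordering and peaks $p(i)$ for $i\le 7$ inherited from~\cite{Banerjee01} and $p(i)=n-7$ for $i\ge 8$ should work, but this is the point that needs a line of justification rather than being asserted.
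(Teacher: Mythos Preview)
Your plan mirrors the paper's proof: same seven-agent base instance from~\cite{Banerjee01}, a single-peaked extension to $n$ agents, and $\distribution$ uniform on $(2^{\{1,\dots,7\}}\setminus\{\emptyset\})\cup\{\{8,\dots,n\}\}$. The paper gives the extra agents the monotone preference $n\succ_i\cdots\succ_i 1$ (peak at $n$ rather than your $n-7$) and extends each original agent's preference so that every size $\ge 8$ ranks strictly below every size $\le 7$; either peak choice works.

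Your dichotomy (i)/(ii), however, has a real gap. Claim (i) --- that $\{8,\dots,n\}$ core-blocks whenever it is not a coalition of $\pi$ --- fails under anonymity: take $\pi=\bigl\{\{1,8,9,\dots,n-1\},\{2,\dots,7,n\}\bigr\}$. Here $\{8,\dots,n\}$ is not a coalition of $\pi$, yet agent $8$ already sits in a coalition of size $n-7$ (its peak) and does not strictly improve by moving to $\{8,\dots,n\}$. The paper patches this by inserting a preliminary case you omit: if some $i\in\{1,\dots,7\}$ lies in a coalition of size $\ge 8$, the singleton $\{i\}$ --- which \emph{is} in the support of $\distribution$ --- blocks, since the extension makes every size $\ge 8$ worse than every size $\le 7$ for the original agents. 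Only after this case is discharged can one argue that every $j\ge 8$ has $|\pi(j)|\le n-7$, and that $|\pi(j)|=n-7$ then forces $\pi(j)=\{8,\dots,n\}$; at that point your (i)/(ii) split becomes valid. You correctly flagged that anonymity makes the two groups harder to separate --- the singleton-blocking step is precisely the missing ingredient your case analysis needs.
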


\begin{restatable}{corollary}{anonySPnoEpsFractCorollary}
Given a parameter $\lambda$, there exists a $\lambda$-bounded distribution such that for $\epsilon < \frac{\lambda }{2^7(\lambda -1)+ 2^n}$ no $\epsilon$-fractional core-stable solution exists in anonymous single-peaked HGs.
\end{restatable}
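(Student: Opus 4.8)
The plan is to mirror the proof of the preceding corollary, reusing the empty-core instance and the ``bad family'' of coalitions produced in the proof of Proposition~\ref{prop:anonySPnoEpsFract}, and then reweighting the distribution so that it becomes $\lambda$-bounded while still concentrating as much mass as possible on that family. Concretely, recall that the proof of Proposition~\ref{prop:anonySPnoEpsFract} builds an anonymous single-peaked HG on $n$ agents with empty core, together with a family $\FFF$ of at most $2^7$ coalitions (roughly, $\FFF = 2^{\{1,\dots,7\}} \cup \{\{8,\dots,n\}\}\setminus\{\emptyset\}$, obtained by padding the seven-agent empty-core instance of~\cite{Banerjee01}) with the property that \emph{every} coalition structure $\pi$ is core-blocked by at least one coalition of $\FFF$.

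First I would define $\distribution$ over $2^N$ to be the distribution assigning probability $\lambda q$ to each coalition of $\FFF$ and probability $q$ to each remaining coalition, where $q$ is fixed by normalization, i.e.\ $q = 1/\big(|\FFF|(\lambda-1)+2^n\big)$; this is exactly the $\lambda$-bounded distribution maximizing $\Pr_{C\sim\distribution}[C\in\FFF]$, so one may equivalently invoke the upper bound of Lemma~\ref{lem:bartlett} with $a=|\FFF|/2^n$ and divide by $|\FFF|$ (the coalitions of $\FFF$ being equiprobable). Since any two coalitions have probability ratio at most $\lambda$, $\distribution$ is $\lambda$-bounded. Then, for an arbitrary $\pi$, I would use the defining property of $\FFF$ to pick $S^{\star}\in\FFF$ core-blocking $\pi$ and bound
\[
\Prob{C \text{ core-blocks } \pi}{C\sim\distribution}\ \ge\ \Prob{C=S^{\star}}{C\sim\distribution}\ =\ \lambda q\ =\ \frac{\lambda}{|\FFF|(\lambda-1)+2^n}\ \ge\ \frac{\lambda}{2^7(\lambda-1)+2^n}\,,
\]
where the last inequality uses $|\FFF|\le 2^7$ and $\lambda\ge 1$. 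Hence for every $\varepsilon<\frac{\lambda}{2^7(\lambda-1)+2^n}$ the partition $\pi$ is not $\varepsilon$-fractional core-stable w.r.t.\ $\distribution$, and since $\pi$ was arbitrary no such solution exists; the corollary for simple fractional HGs is identical with $2^7$ replaced by $2^{40}$.

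I do not expect a real obstacle here: all the combinatorial work lives in Proposition~\ref{prop:anonySPnoEpsFract}, and the corollary is a routine reweighting of the extremal $\lambda$-bounded distribution. The only points requiring care are (i) confirming that the family $\FFF$ inherited from that proof genuinely satisfies ``some member of $\FFF$ core-blocks every $\pi$'' — which is precisely what that proof establishes, so nothing new is needed — and (ii) being consistent about counting the ground set as $2^n$ subsets, which is the convention already fixed by Lemma~\ref{lem:bartlett}.
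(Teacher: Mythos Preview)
Your proposal is correct and follows essentially the same approach as the paper: reuse the empty-core instance and bad family $\FFF$ from Proposition~\ref{prop:anonySPnoEpsFract}, put the extremal $\lambda$-bounded distribution (weight $\lambda q$ on $\FFF$, weight $q$ elsewhere), and lower bound the blocking probability by the mass of a single guaranteed blocking coalition in $\FFF$. The only cosmetic difference is that the paper parametrizes via $p=\lambda q$ and states $|\FFF|=2^7$ exactly, whereas you carry $|\FFF|\le 2^7$ through to the end; both yield the same bound.
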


As a consequence, both for simple fractional and anonymous HGs, under uniform distributions, where $\lambda=1$, and constant values of $\lambda$, the provided bound becomes exponentially small.  Hence, in the rest of this paper, we will be focusing on these specific classes of distributions.

\section{Simple Fractional Hedonic Games}\label{sec:fractional}
In this section, we present an algorithm returning an \epsfrac\ for simple FHGs with $\eps$ exponentially small in the number of agents, and prove its correctness, as summarized in the following claim.

\begin{theorem}\label{thm:fractional}
    Given a parameter $\delta\in(0,1)$, for any simple FHG instance and with confidence $1-\delta$, we can efficiently compute an $\eps$-fractional core-stable partition for every $\eps \ge 2^{-\Omega(n^{1/3})}$ .
\end{theorem}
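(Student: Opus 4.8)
The plan is to construct a partition that "saturates" most agents into structures they cannot improve upon, so that any blocking coalition must be drawn from a small, explicitly bounded family, whose relative size is $2^{-\Omega(n^{1/3})}$. Concretely, I would work with the directed graph $G=(N,E)$ representing the simple FHG. The key observation is that in a simple FHG, agent $i$ gets utility $\frac{|S \cap N^+(i)|}{|S|}$ in coalition $S$, where $N^+(i)$ is the out-neighbourhood; in particular the maximum possible utility of $i$ is $1$, achieved exactly when $S \subseteq \{i\} \cup N^+(i)$ and $S \setminus \{i\}$ is nonempty, i.e.\ when $S$ induces (from $i$'s perspective) an out-clique. So the natural target is to find many vertex-disjoint mutual-clique-like structures (pairs $\{i,j\}$ with $(i,j),(j,i)\in E$, or larger bidirectional cliques) that give everyone involved utility exactly $1$: such agents can never be part of a blocking coalition, and a coalition is blocking only if \emph{every} member strictly improves.

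The first step is a case split on the density of $G$. If $G$ has many "mutual edges" (pairs $i,j$ with edges both ways), greedily pick a maximal matching $M$ in the undirected graph of mutual edges and put each matched pair in its own coalition; everyone matched has utility $1$. The remaining unmatched agents form an independent set in the mutual-edge graph, and we place them as singletons (or handle them recursively). A blocking coalition $S$ must consist entirely of agents who strictly improve, hence cannot contain any matched agent, so $S$ lies inside the unmatched set $U$. The second step is to bound $|U|$: if $|U|$ is small, say $|U| \le c\, n^{1/3}$, then the number of possible blocking coalitions is at most $2^{|U|} \le 2^{c n^{1/3}}$ out of $2^n - 1$ total, giving the claimed $\eps$. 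The hard case is when $|U|$ is large but the mutual-edge graph on $U$ is empty — then inside $U$ all edges are one-directional. Here I would use a tournament/out-degree argument: some agent in $U$ has out-degree $\ge (|U|-1)/2$ within $U$, build a "star-like" coalition around a high-out-degree vertex so that the center gets utility close to but below $1$ and the leaves get utility... this is where the argument needs care, because in a one-directional structure not everyone can simultaneously get high utility. The fallback is to route the large-sparse-mutual case into the \emph{few-blocking-coalitions} regime directly: show that if mutual edges are scarce, then for a carefully chosen partition the blocking coalitions are forced to be small (size $O(n^{1/3})$) \emph{and} supported on a small vertex set, again yielding $2^{O(n^{1/3})}$ of them.

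The main obstacle I anticipate is precisely this interplay: I must simultaneously control both the \emph{number} of agents that could appear in a blocking coalition and the \emph{size} of such coalitions, because the count of admissible coalitions is roughly $\binom{(\text{eligible agents})}{\le (\text{max size})}$, and I need this to be $2^{O(n^{1/3})}$. The exponent $n^{1/3}$ (rather than $\log n$ or $\sqrt n$) strongly suggests a three-way threshold argument — e.g.\ partition parameters around $n^{1/3}$, $n^{2/3}$, and $n$, matching the hint in the macros \verb|\fhg| $= n^{1/3}/62$ and \verb|\fdeg| $= n - 31 n^{2/3}$ — where one bucket handles high-out-degree vertices (absorbed into near-optimal coalitions), one handles a medium remainder, and the tiny $O(n^{1/3})$ leftover is brute-forced into the counting bound. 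So concretely: (i) repeatedly extract a vertex $v$ of out-degree $\ge n - 31 n^{2/3}$ together with $\approx n^{1/3}$ of its out-neighbours to form a coalition where $v$ and those neighbours all have utility exceeding anything they could get elsewhere; (ii) argue only $O(n^{2/3})$ vertices are ever "exposed" as potential blockers at each peeling step and the process terminates after few steps; (iii) conclude that blocking coalitions number at most $2^{O(n^{1/3})}$, so dividing by $2^n-1$ gives $\eps \ge 2^{-\Omega(n^{1/3})}$. Finally, the learning phase: to run the algorithm we only need, for each relevant agent, its utility on the $O(n^{1/3})$-size coalitions we actually build, and in a simple FHG the full valuation $v_i(\cdot)$ is determined by the $0/1$ values $v_i(j)$, i.e.\ by the edge set incident to $i$; a coalition sample of size $2$ containing $i$ and $j$ reveals $v_i(j)$, so by a standard coupon-collector/Chernoff argument $\mathrm{poly}(n, 1/\eps, \log 1/\delta)$ uniform samples suffice to recover all needed edges with probability $1-\delta$, after which the computation phase is deterministic polynomial time and the output is exactly \epsfrac.
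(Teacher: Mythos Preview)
Your proposal has a genuine structural gap. The core plan---find many agents who achieve utility exactly $1$ (via mutual-edge matchings or out-cliques) so that blocking coalitions must avoid them entirely---breaks down because a simple FHG may have \emph{no} mutual edges at all (e.g.\ when $G$ is a tournament). You acknowledge this as ``the hard case,'' but the fallback you sketch (``blocking coalitions are forced to be small and supported on a small vertex set'') is never justified, and in fact cannot be: in a tournament on $n$ vertices there is no partition for which the set of blocking coalitions is supported on $O(n^{1/3})$ vertices. The concrete peeling step you propose---take a vertex $v$ of out-degree $\ge n-31n^{2/3}$ together with $\approx n^{1/3}$ out-neighbours so that ``$v$ and those neighbours all have utility exceeding anything they could get elsewhere''---is simply false for the neighbours, who may have zero out-edges inside that coalition and utility $0$.

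The paper's argument is different in kind: it is \emph{probabilistic}, not a direct count of blocking coalitions. It never produces agents with utility $1$; instead it produces $\Theta(n^{1/3})$ ``green'' agents $i$ whose utility in $\pi$ slightly exceeds $d_i/n$, and then uses Chernoff concentration on $|C|$ and $|C\cap N_i|$ (both $\approx$ half their maximum for $C\sim U(2^N)$) to show each green agent satisfies $\Pr[v_i(C)>v_i(\pi)\mid i\in C]\le 2^{-\Omega(n^{1/3})}$. The final bound is then $\Pr[C\cap\Gr=\emptyset]+\Pr[\text{some green agent in }C\text{ still improves}]$, not a count of blocking sets. Your threshold intuition about $n^{1/3}$ and $n-31n^{2/3}$ is on target, but the two cases are ``many high-degree vertices'' (build one large near-clique) versus ``many low-degree vertices'' (match each to $\lceil 2d_i/(n-d_i)\rceil$ neighbours), and in both cases greenness is established via concentration, not via making anyone unblockable. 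Finally, your learning step is also off: under $U(2^N)$ a specific pair $\{i,j\}$ appears with probability $2^{-n}$, so size-$2$ samples are useless; the paper instead treats each sampled $S\ni i$ as a linear equation in the unknowns $v_i(j)$ and shows $O(n+\log(n/\delta))$ samples make the random $0/1$ coefficient matrix full-rank with probability $1-\delta$.
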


\subsection{Computing an \texorpdfstring{$\eps$}{epsilon}-fractional core-stable partition}
 We will be working only with the uniform distribution over all possible coalitions denoted by $U(2^N)$.

The proof of the theorem is quite involved and needs several steps. 
To facilitate the reader, the section is organized as follows: First, we provide a high-level explanation of~\Cref{alg:fhg} for computing an \epsfrac\ partition, then we show various parts of the proof as separate lemmas, and finally, we assemble all together to prove~\Cref{thm:fractional}.

As a first step, \Cref{alg:fhg} consists of a learning phase where the exact valuations are computed with confidence $1-\delta$. Then, it starts the computation phase building the graph representation $G$ based on the learned preferences. In order to compute an \epsfrac\ partition, the algorithm has to pay attention to the density of $G$. In particular, the main intuition is to distinguish between instances based on the number of nodes having high out-degrees, treating the two arising cases separately. To this aim, observe first that the biggest fraction of (sampled) coalitions has a size close to $\frac{n}{2}$\footnote{ Note that $\frac{n}{2}$ is the average size of coalitions drawn from the uniform distribution.}. As a consequence, if many nodes have low out-degree, a matching-like partition would be difficult to core-block, since a sampled coalition $C$ will often contain very few neighbors of such nodes, compared to its size. Thus, such nodes would have a lower utility in $C$. On the other hand, if a sufficient amount of nodes have a high out-degree, we can form a clique large enough to make any partition containing it hard to core-block.
In fact, with high probability, the sampled coalition $C$ would contain at least one of such clique nodes, not connected to all the other agents in $C$. Therefore, it would have a lower utility moving to $C$.
\begin{algorithm}[tb]
\setstretch{1.25}
\SetNoFillComment
\DontPrintSemicolon
\caption{Stabilizing Simple FHGs}
\label{alg:fhg}
\KwIn{$N$, $\mathcal{S}=\langle ( S_j,\vec{v}(S_j))\rangle_{j=1}^m$}
\KwOut{$\pi$ -- a $2^{-\Omega(n^{1/3})}$-fractional core-stable partition of $N$}
Compute $\vec{v}$ and build the corresponding graph $G$\\
$\pi \gets  \{ \{i\} | i \in N \}$\\
Let $d_i$ be the out-degree of each $i$ in $G$\\
$\phi \gets \modulus{\{ i \text{ s.t. }d_i \leq \fdeg\}}$ \\
$\Gr \gets \emptyset$ \\
\If{$\phi \geq \fhg$}{
Sort agents in non-decreasing order of out-degree \\
$H \gets \left[1, \ldots, \fhg \right]$\\
$count \gets 1$\\
\While{$count \leq \frac{n^{1/3}}{124}$}{
Select first remaining agent $i \in H$ in the order \label{line:selectAgent}\\
$\Gr \gets \Gr \cup \{i\}$\\
Let $F_i$ be a set of $\ceil[\big]{\frac{2d_i}{n - d_i}}$ neighbors of $i$, giving priority to singleton agents in $N\setminus H$\\
Let $C_i= \{i\} \cup \bigcup_{j \in F_i} \pi(j)$\\
$\pi \gets \pi \cup \{C_i\} \setminus \{\pi(j) | j \in C_i\}$\\
$H \gets H \setminus (F_i \cup \{i\})$\\
$count \gets count +1$
}
All remaining singletons $\pi(i) =\{i\}$ in $\pi$ are grouped together\\
}
\Else{
$F \gets N$\\
\For{$count = 1, \ldots, \frac{n^{1/3}}{124}$}{
Pick $i \in F \setminus \Gr$ of maximal out-degree\\
Delete from $F$ all agents not in the neighborhood $N_i$ of $i$\\
$\Gr \gets \Gr \cup \{i\}$
}
$\pi \gets \{F, N\setminus F\}$
}
\KwRet{$\pi$}
\end{algorithm}

First, we focus on the learning aspect and show that it is possible to learn the exact valuations in simple FHGs, upon sampling a number of sets polynomial in $n$ and $\log\frac{1}{\delta}$.
\begin{restatable}{lemma}{fhgLearning}\label{lem:frLearning}
By sampling $m\ge 16\log \frac{n}{\delta} + 4n$ sets from $U(2^N)$ it is possible to learn exactly the valuation functions $v_1,\dots,v_n$ with confidence $1-\delta$.
\end{restatable}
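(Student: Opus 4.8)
\emph{Proof plan.} The idea is to reduce exact learning of each valuation $v_i$ to solving a linear system with a random $0/1$ coefficient matrix, and then to use the Chernoff bound to guarantee that this matrix has full rank with high probability.

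In a simple FHG the function $v_i$ is completely determined by the out-neighbourhood $N_i:=\{j\in N\setminus\{i\}:v_i(j)=1\}$, since $v_i(S)=|S\cap N_i|/|S|$ for every $S\ni i$; hence it suffices to reconstruct the indicator vector $\mathbf 1_{N_i}\in\{0,1\}^{N\setminus\{i\}}$ for every agent $i$. Each sampled coalition $S_j$ with $i\in S_j$, together with the observed value $v_i(S_j)$, reveals the nonnegative integer $|S_j\cap N_i|=|S_j|\,v_i(S_j)$, i.e.\ one linear equation $\langle\mathbf 1_{S_j\setminus\{i\}},\mathbf 1_{N_i}\rangle=|S_j|\,v_i(S_j)$ in the unknown $\mathbf 1_{N_i}$. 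Reducing these equations modulo $2$, let $M_i$ be the $\mathbb F_2$-matrix whose rows are the vectors $\mathbf 1_{S_j\setminus\{i\}}$ over all sampled coalitions containing $i$. If $M_i$ has full column rank $n-1$ over $\mathbb F_2$, then $\mathbf 1_{N_i}$ is the unique solution of the reduced system and is recovered by Gaussian elimination -- any other $0/1$ candidate must agree with $\mathbf 1_{N_i}$ modulo $2$ and is therefore equal to it -- so $v_i$ is learned exactly.

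It remains to bound the probability that some $M_i$ is rank-deficient. Conditioning on the index set of samples containing $i$ and writing $m_i'$ for its size, each corresponding row $\mathbf 1_{S_j\setminus\{i\}}$ is independent and uniform in $\mathbb F_2^{\,n-1}$, so $\Pr[M_i v=0]=2^{-m_i'}$ for every fixed nonzero $v$, and a union bound over directions gives $\Pr[\operatorname{rank}M_i<n-1]\le(2^{n-1}-1)\,2^{-m_i'}<2^{\,n-1-m_i'}$. Finally, each sample contains $i$ with probability $2^{n-1}/(2^n-1)\ge 1/2$, so $\mathbb E[m_i']\ge m/2$; for $m\ge 16\log\frac{n}{\delta}+4n$ the lower-tail Chernoff bound~\eqref{eq:chernoff} forces $m_i'>n-1+\log_2\frac{2n}{\delta}$ simultaneously for all $i$ except with probability $\le\delta/2$, on which event each agent fails with probability below $\delta/(2n)$; a union bound over the $n$ agents then yields the claimed confidence $1-\delta$.

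I expect the only real obstacle to be the constant bookkeeping in this last step: one must split the sample budget so that the $4n$ part beats the $2^{n-1}$ union bound over directions for every agent, while the $16\log\frac{n}{\delta}$ part, via an appropriate choice of the Chernoff deviation parameter, drives the lower-tail probability below $\delta/(2n)$ after union-bounding over agents. (One could instead argue over $\mathbb R$, bounding the chance that a random $0/1$ matrix with $m_i'$ rows has rank $<n-1$ using that an $r$-dimensional subspace meets the hypercube in at most $2^r$ points, but the $\mathbb F_2$ reduction gives the cleanest union bound.)
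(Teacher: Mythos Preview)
Your proposal is correct and follows essentially the same route as the paper: reduce learning $v_i$ to full column rank of the random $0/1$ coefficient matrix formed by the sampled coalitions containing $i$, bound the rank-deficiency probability, use Chernoff to guarantee enough such rows, and union-bound over agents. The only cosmetic differences are that you work explicitly over $\mathbb{F}_2$ and bound rank deficiency via a union bound over nonzero kernel vectors, whereas the paper bounds it through the column-independence product $\prod_{k=1}^{n-1}\bigl(1-2^{-(m'-k+1)}\bigr)$ without naming the field; the constant bookkeeping goes through either way, with the paper setting the intermediate target $m'=3\log\frac{n}{\delta}+n-1$ and deriving $m\ge 4\log\frac{2n}{\delta}+4m'$ exactly as you anticipate.
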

The following definition is crucial for proving the main result.
\begin{definition}
Given a partition $\pi$, an agent $i \in N$ is \emph{green} with respect to $\pi$ if
\begin{equation}\label{eq:green}
    \Prob{v_i(C) > v_i(\pi)| i \in C}{C \sim U(2^N)} \le 2^{-\Omega(n^{1/3})} \; .
\end{equation}
\end{definition}
Since a coalition containing a green agent w.r.t.\ $\pi$ is unlikely to core-block $\pi$, if we manage to show that there are enough green agents in the partition returned by \Cref{alg:fhg}, then we directly prove also its $\eps$-fractional core-stability, as in this case any randomly drawn coalition contains at least one green agent with high probability. From now on, let $\phi = \modulus{\{ i \text{ s.t. }d_i \leq \fdeg\}}$ be as defined in~\Cref{alg:fhg}. As already informally explained above, the algorithm follows two different procedures based on $\phi$ being greater equal or lower than $\fhg$. The set $\Gr$ defined in~\Cref{alg:fhg} in both cases is meant to contain exclusively green agents w.r.t. the returned partition. We will start by proving that this is indeed true if $\phi \le \fhg$. Since we will never use in-degrees, from now on we will simply use the term \emph{degree} in place of out-degree. 

We first prove the correctness of the easier case of the algorithm in the else part of the if statement (line 19).
\begin{lemma}\label{lem:frCase2}
    Let $\phi < \fhg$ and $\pi$ be the partition returned by \Cref{alg:fhg}. Then $\Gr$ contains only green agents w.r.t to $\pi$.
\end{lemma}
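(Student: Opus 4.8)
The plan is to analyze the \texttt{else}-branch of the algorithm (lines 19--26), which is entered when $\phi < \fhg$, i.e.\ fewer than $\frac{n^{1/3}}{62}$ agents have degree at most $\fdeg$. This means that \emph{more} than $n - \fhg$ agents have degree strictly greater than $\fdeg = n - 31n^{2/3}$; call these the high-degree agents. In this branch, the algorithm greedily picks $\frac{n^{1/3}}{124}$ agents $i_1, i_2, \ldots$ into $\Gr$, each of maximal degree within the surviving set $F$, and after each pick it removes from $F$ every agent that is not an out-neighbor of the chosen agent. The returned partition is $\pi = \{F, N \setminus F\}$. So each $i_t \in \Gr$ is guaranteed to be adjacent (out-edge) to \emph{every} agent that remains in $F$ at the end, hence $i_t$ is adjacent to all of $\pi(i_t) = F$. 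I would first argue that $F$ stays large: each pick removes at most $n - 1 - d_{i_t} < 31n^{2/3}$ agents (since $d_{i_t} > \fdeg$, which I need to justify — at step $t$ the surviving set $F$ still contains at least $n - \fhg - (\text{agents removed so far})$ high-degree agents, and as long as this is positive, the maximal-degree agent picked has true degree $> \fdeg$; the total removed over all $\frac{n^{1/3}}{124}$ rounds is at most $\frac{n^{1/3}}{124}\cdot 31n^{2/3} = \frac{n}{4}$, which is comfortably less than $n - \fhg$ for large $n$). Hence $|F| \ge n - \frac{n}{4} = \frac{3n}{4}$ throughout, and every $i_t \in \Gr$ has all of $F \setminus \{i_t\}$ as out-neighbors, so $v_{i_t}(\pi) = v_{i_t}(F) = \frac{|F| - 1}{|F|} \ge 1 - \frac{1}{|F|}$, which is very close to $1$.

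**The core estimate: bounding the green probability.** Fix $i = i_t \in \Gr$. I want to show $\Prob{v_i(C) > v_i(\pi) \mid i \in C}{C \sim U(2^N)} \le 2^{-\Omega(n^{1/3})}$. Since $v_i(\pi) = \frac{|F|-1}{|F|}$, a coalition $C \ni i$ can core-block via $i$ only if $v_i(C) > \frac{|F|-1}{|F|}$, i.e.\ the fraction of $i$'s out-neighbors inside $C$ exceeds $\frac{|F|-1}{|F|}$. Equivalently, writing $k = |C|$ and letting $\bar d_i = n - 1 - d_i$ be the number of non-neighbors of $i$ (at most $n - 1$, but crucially we only need an upper bound on how many of them land in $C$), the number of non-neighbors of $i$ in $C \setminus \{i\}$ must be less than $k\left(1 - \frac{|F|-1}{|F|}\right) = \frac{k}{|F|} \le \frac{4k}{3n}$. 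Now condition on $i \in C$: the other $n-1$ agents are each in $C$ independently with probability $1/2$, so $k - 1 \sim \mathrm{Bin}(n-1, 1/2)$ and the non-neighbors of $i$ in $C$ follow $\mathrm{Bin}(\bar d_i, 1/2)$ independently of which neighbors are in. The bad event requires simultaneously that $C$ is not too small and that almost none of $i$'s non-neighbors are in $C$. Here is the subtlety: if $\bar d_i$ is itself small, $i$ could have few non-neighbors and this event is not rare. But $i \in \Gr$ was picked as the \emph{maximal}-degree agent at its step, so it is \emph{not} automatically high-degree in a useful quantitative way just from that — I need to lean on the structural fact that among the surviving high-degree agents one gets picked, so in fact $d_i > \fdeg$, giving $\bar d_i < 31 n^{2/3}$. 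Hmm — but small $\bar d_i$ makes it \emph{easier}, not harder, to core-block. Let me reconsider: with $\bar d_i < 31n^{2/3}$, the bad event is that $\mathrm{Bin}(\bar d_i, 1/2) < \frac{4k}{3n}$. For a typical coalition $k \approx n/2$, so we'd need fewer than $\approx \frac{2}{3}$ non-neighbors in $C$, i.e.\ at most $0$ of them — probability $2^{-\bar d_i}$, which is NOT exponentially small in $n^{1/3}$ when $\bar d_i$ could be as small as, say, $n^{1/3}$. So the argument must instead be: the algorithm picks agents of \emph{maximal} degree, and in this branch there are $> n - \fhg \gg$ agents with $d > \fdeg$; but that only says $d_i$ is large, i.e.\ $\bar d_i$ small. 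I think the correct reading is that small $\bar d_i$ is \emph{good}: $v_i(C) > v_i(\pi)$ needs $C$ to be essentially a clique around $i$, but we also need $|C|$ small (at most $\approx |F|/(\text{something})\cdot\bar d_i$). Precisely, $v_i(C) > 1 - \frac{1}{|F|}$ forces $\frac{\bar d_i^C}{|C|} < \frac{1}{|F|}$ where $\bar d_i^C \ge 0$ is an integer; if $\bar d_i^C = 0$ this forces nothing on size, but then $C$ contains none of $i$'s $\bar d_i$ non-neighbors, probability exactly $2^{-\bar d_i}$; if $\bar d_i^C \ge 1$ then $|C| > |F| \ge \frac{3n}{4}$, i.e.\ $C$ must contain almost all agents, and then $C \supseteq$ almost all of $F$, in which case $v_i(C)$ is actually \emph{close to} but still can exceed $v_i(\pi)$ only if $C$ misses more of $i$'s non-neighbors than $F$ does — but $F \setminus \{i\}$ are all neighbors, so this is delicate. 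I would handle it by splitting on $\bar d_i^C = 0$ versus $\bar d_i^C \ge 1$: the first has probability $2^{-\bar d_i}$; for the second, $|C| \ge \frac{3n}{4}$ has probability $\le e^{-\Omega(n)}$ by Chernoff (\ref{eq:chernoff}). So $\Prob{\text{core-block via }i}{} \le 2^{-\bar d_i} + e^{-\Omega(n)}$. The final ingredient is that $\bar d_i = n - 1 - d_i$, and while $d_i > \fdeg$ gives $\bar d_i < 31n^{2/3}$, we need a \emph{lower} bound showing $\bar d_i$ is not tiny — or rather, we need $2^{-\bar d_i}$ small, i.e.\ $\bar d_i = \Omega(n^{1/3})$. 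This does \emph{not} follow from the branch condition; instead it must follow from the fact that $i$ is a neighbor of everything remaining in $F$ and $F$ is large, so... actually it does not.

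**Resolving the difficulty.** Rereading, I believe the resolution is that the \emph{else}-branch proof does not need $\bar d_i$ large; rather the event $v_i(C) > v_i(\pi)$ is vacuous or nearly so because $v_i(\pi)$ is maximal. Concretely: $v_i(F) = \frac{|F|-1}{|F|}$ is the \emph{largest} value $i$ could get in any coalition where $i$ is adjacent to all others of that size, and to beat it $C$ must have a strictly higher neighbor-fraction, which since fractions with denominator $|C| \le n$ are discrete, means $v_i(C) \ge \frac{|C|-1}{|C|} > \frac{|F|-1}{|F|}$, forcing $|C| > |F| \ge \frac{3n}{4}$ \emph{and} $C$ a clique around $i$. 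So in \emph{every} core-blocking-via-$i$ scenario we need $|C| \ge \frac{3n}{4}$, whose probability (given $i \in C$) is $e^{-\Omega(n)} \le 2^{-\Omega(n^{1/3})}$ by Chernoff. That is the clean argument, and it makes the "$\bar d_i^C = 0$" branch irrelevant since then $v_i(C) = \frac{|C|-1}{|C|}$ still needs $|C| > |F|$. The main obstacle I anticipate is pinning down this discreteness step rigorously — showing $v_i(C) > v_i(\pi)$ with $\pi(i) = F$ and $i$ adjacent to all of $F \setminus\{i\}$ really does force $|C| > |F|$ — and the companion bookkeeping that $|F| \ge \frac{3n}{4}$ throughout (which in turn needs that each picked agent has degree $> \fdeg$, relying on the branch hypothesis $\phi < \fhg$ so that high-degree agents survive all $\frac{n^{1/3}}{124}$ rounds). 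Once those are in place, a single application of the Chernoff bound $\Prob{|C| \ge \frac{3n}{4} \mid i \in C}{} = \Prob{\mathrm{Bin}(n-1,1/2) \ge \frac{3n}{4} - 1}{} \le e^{-\Omega(n)}$ closes each agent's green condition (\ref{eq:green}), and since $\Gr$ has only $\frac{n^{1/3}}{124}$ agents, no union bound blow-up matters.
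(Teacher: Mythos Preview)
Your proposal, after the exploratory detours, lands on exactly the argument the paper gives: every agent picked into $\Gr$ has out-degree exceeding $n-31n^{2/3}$ (since high-degree agents survive in $F$ throughout the $\frac{n^{1/3}}{124}$ rounds), so each round removes at most $31n^{2/3}$ agents and $|F|\ge \frac{3n}{4}$; each $i\in\Gr$ is adjacent to all of $F$, so $v_i(\pi)=1-\frac{1}{|F|}\ge 1-\frac{4}{3n}$, and since $v_i(C)\le 1-\frac{1}{|C|}$ for any $C\ni i$, a core-block via $i$ forces $|C|>|F|\ge\frac{3n}{4}$, which has probability $\le e^{-n/24}=2^{-\Omega(n^{1/3})}$ by Chernoff. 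The only difference is cosmetic: the paper thresholds at $\frac{3n}{4}$ directly, while you threshold at $|F|$ first and then use $|F|\ge\frac{3n}{4}$; and your first attempted approach via counting non-neighbors in $C$ was a dead end that you correctly abandoned.
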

\begin{proof}
Let us say that an agent $i \in N$ has \emph{high} degree if $d_i \ge \fdeg$. 
Observe that, if such an agent is picked at any iteration of the algorithm, then at most $31n^{2/3}$ agents are deleted from $F$ at that iteration. By hypothesis, there are at least $n- \fhg$ agents with high degree. Since they are picked in non-increasing degree order, after $t$ iterations the agents left in $F$ with high degree are at least $n - \fhg - t \cdot 31 n^{2/3}$, which is strictly positive for $t \le \frac{n^{1/3}}{124}$. This means that all agents in $\Gr$ have high degree, so that
$\modulus{F} \ge n - 31n^{2/3} \cdot \frac{n^{1/3}}{124} = \frac{3n}{4}$.
Moreover, since by construction at the end of the for loop each $i \in \Gr$ is connected to all the other agents in $F$,  $v_i(F) \geq 1-\frac{4}{3n}$, which is at least the utility $i$ would have in a clique of $\ge \frac{3n}{4}$ agents. Therefore, any coalition of size at most $\frac{3n}{4}$ containing $i$ cannot core block $\pi$. As a consequence, for each $i \in \Gr$ it holds that:
\begin{equation*}
    \Prob{v_i(C) > v_i(F) | i \in C}{C \sim U(2^N)} \leq \Prob{\modulus{C} > \frac{3n}{4}\middle| i \in C}{C \sim U(2^N)} \le e^{-\frac{n}{24}} < 2^{-\Omega(n^{1/3})} \; ,\\
\end{equation*}
where we used~\Cref{eq:chernoff} and the fact that $n>1$.
\end{proof}
The procedure returning a partition in the complementary case $\phi \ge \fhg$ is a bit more involved, and we divide the proof in different cases according to the degree of the agents added to $\Gr$.

\begin{restatable}{lemma}{fhgDifficultCase}\label{lem:frCase1}
Let $\phi \ge \fhg$ and let $\pi$ be the partition returned by \Cref{alg:fhg} in this case. Then the following statements hold:
\begin{enumerate}[(i)]
    \item $H$ is never empty when executing line 11 of the while loop of \Cref{alg:fhg};
    \item each agent $i \in \Gr$ is green.
\end{enumerate}
\end{restatable}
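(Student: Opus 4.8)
For part \emph{(i)} I would argue by bookkeeping on how $H$ shrinks. Throughout the \texttt{while} loop the coalitions $C_{i_1},C_{i_2},\dots$ built so far contain only the (at most $\tfrac{n^{1/3}}{124}$) selected agents together with their sets $F_{i_\ell}$, and since every $|F_{i_\ell}|=\lceil 2d_{i_\ell}/(n-d_{i_\ell})\rceil\le \tfrac{2n^{1/3}}{31}+1$, at every moment there are only $O(n^{2/3})$ non-singleton agents; with $|H|=O(n^{1/3})$ this leaves at least $n-O(n^{2/3})$ singleton agents in $N\setminus H$ at all times. Hence when $i$ is selected and we must choose $|F_i|$ of its out-neighbours, we can take them all among singletons of $N\setminus H$ whenever $d_i$ is above a fixed multiple of $n^{2/3}$ (for then $i$ has $d_i-O(n^{2/3})\ge|F_i|$ such neighbours), while for $d_i=O(n^{2/3})$ we have $|F_i|\le 1$ in any case. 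So each iteration removes at most two agents from $H$ ($i$ itself, plus at most one out-neighbour borrowed from $H$ in the low-degree corner case). As the loop runs $\tfrac{n^{1/3}}{124}$ times and $|H|$ starts at $\tfrac{n^{1/3}}{62}=2\cdot\tfrac{n^{1/3}}{124}$, $H$ is non-empty at every execution of line~11; this is exactly why $124=2\cdot 62$.

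For part \emph{(ii)}, fix $i\in\Gr$. If $d_i=0$ then $v_i\equiv 0$, no coalition strictly improves on $\pi(i)$, and $i$ is trivially green; assume $d_i\ge 1$. Two structural facts are needed. (a) Every coalition output by the algorithm has size $O(n^{1/3})$: a merged coalition is a path of ``absorbing'' steps, each step a distinct selected agent with $|F|=1$ (so at most $\tfrac{n^{1/3}}{124}$ of them), hanging off a root of size $\le 1+\tfrac{2n^{1/3}}{31}$. (b) $C_i$ gets absorbed into a later coalition only when that later agent has no singleton neighbour in $N\setminus H$, which — by the abundance of singletons, and because agents are processed in non-decreasing degree order — forces $d_i=O(n^{2/3})$; in particular, whenever $d_i$ exceeds that threshold we have $\pi(i)=C_i=\{i\}\cup F_i$ exactly, with all of $F_i$ out-neighbours of $i$.

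Now condition on $i\in C$, write $C=\{i\}\cup C'$, and split $C'$ into its out-neighbour part $X\sim\mathrm{Bin}(d_i,\tfrac12)$ and its remaining part $Y\sim\mathrm{Bin}(n-1-d_i,\tfrac12)$ (independent), so that $v_i(C)=\tfrac{X}{1+X+Y}$. \emph{Case $d_i$ above the $O(n^{2/3})$ threshold of (b):} here $v_i(\pi(i))=\tfrac{k}{k+1}$ with $k:=|F_i|=\lceil 2d_i/(n-d_i)\rceil$, which forces $d_i\le\tfrac{k}{k+2}n$; hence $\mathbb{E}[v_i(C)]\approx \tfrac{d_i}{n}\le\tfrac{k}{k+2}$, lying below $v_i(\pi(i))$ by a gap $\Omega(1/k)$, and a Chernoff bound~\eqref{eq:chernoff} applied to $X$ (to keep it within a factor $1+\Theta(1/k)$ of its mean, failure probability $2^{-\Omega(d_i/k^2)}$) and to $Y$ (failure probability $2^{-\Omega(n/k^2)}$) gives $\Pr[v_i(C)>v_i(\pi(i))]\le 2^{-\Omega(n^{1/3})}$, using $d_i=\Omega(n^{2/3})$ (or at least $\Omega(n^{1/3})$ once $k=1$) and $k=O(n^{1/3})$. \emph{Case $d_i=O(n^{2/3})$:} here I use only the crude bound $v_i(\pi(i))\ge \tfrac{1}{|\pi(i)|}=\Omega(n^{-1/3})$ from (a) (the out-neighbour placed into $F_i$ stays in $i$'s coalition forever), while $d_i(C)\le d_i=O(n^{2/3})$ and $|C|\ge n/4$ except with probability $2^{-\Omega(n)}$, so $v_i(C)=O(n^{-1/3})$ with a strictly smaller constant; making the constants precise (this is where the coalition-size constant from (a) must be small) and bounding the probability that $d_i(C)$ exceeds the relevant threshold by Chernoff again gives $2^{-\Omega(n^{1/3})}$. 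Since the two ranges of $d_i$ overlap, every $d_i\ge 1$ is covered, so every $i\in\Gr$ is green.

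The step I expect to be the main obstacle is fact (b): determining precisely when a coalition can be absorbed, which requires carefully tracking how many agents can simultaneously be non-singletons or lie in $H$ and then exploiting the non-decreasing degree order to bound the root degree of any absorbable coalition; one also has to check that the explicit constants line up so that the two degree regimes genuinely overlap (equivalently, that the coalition-size constant produced by (a) is small enough). Everything else is careful constant-chasing around the Chernoff bound.
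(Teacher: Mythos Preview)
Your plan is essentially the paper's: split part~(ii) by a degree threshold, compute $v_i(\pi)$ exactly in the high-degree case and bound it by $1/|\pi(i)|$ in the low-degree case, and finish each case with Chernoff. The paper puts the threshold at $n^{1/3}/15$ (and further splits the high range at $d_i=n/3$), whereas you use $\Theta(n^{2/3})$; both work. Your part~(i) bookkeeping and your structural fact~(b) are correct --- an absorbing step forces the later agent $i'$ to have $d_{i'}\le cn^{2/3}$, and by the non-decreasing processing order every earlier selected agent (in particular the one being absorbed) then also has degree $\le cn^{2/3}$.

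There is, however, one real gap in the high-degree case. You apply the lower-tail Chernoff bound to $Y\sim\mathrm{Bin}(n{-}1{-}d_i,\tfrac12)$ and claim failure probability $2^{-\Omega(n/k^2)}$, but $\mathbb{E}[Y]=(n{-}1{-}d_i)/2$, so what Chernoff actually gives with deviation $\Theta(1/k)$ is $2^{-\Omega((n-d_i)/k^2)}$. At the top of the allowed range $d_i=n-31n^{2/3}$ one has $k\approx 2n^{1/3}/31$ and hence $(n-d_i)/k^2=\Theta(1)$: your bound degrades to a constant and does not meet the $2^{-\Omega(n^{1/3})}$ required by the definition of green. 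The fix is immediate and is exactly what the paper does: apply the lower-tail bound to $|C|=1+X+Y$ (mean $\approx n/2$) instead of to $Y$ alone; then the failure probability is genuinely $2^{-\Omega(n/k^2)}=2^{-\Omega(n^{1/3})}$.

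One minor over-estimate in (a): since every absorber is low-degree and processing is in non-decreasing degree order, the \emph{root} of any merged chain is itself low-degree, hence has $|F|\le 1$ and size $\le 2$; the resulting coalition-size bound is $\le n^{1/3}/124+1$, which is what the paper uses. Your looser root bound $1+2n^{1/3}/31$ still makes the constants work, but is not what actually occurs.
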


\begin{proof}[Proof of Theorem~\ref{thm:fractional}]
We are now able to assemble all the above claims together to prove the main theorem. According to the above lemmas, in both the cases of \Cref{alg:fhg}, i.e.\ $\phi < \fhg$ and $\phi \ge \fhg$, the nodes put in $\Gr$ are green and $\gamma := \modulus{\Gr}=n^{1/3}/124$. 

It remains to show that the output $\pi$ of~\Cref{alg:fhg} is a $2^{-\Omega(n^{1/3})}$-fractional core-stable partition:
\begin{align*}
    \Prob{C \text{ blocks } \pi}{C\sim U(2^N)} &\le \Prob{C \cap \Gr = \emptyset}{C\sim U(2^N)} + \Prob{C \cap \Gr \neq \emptyset \wedge C \text{ blocks } \pi}{C\sim U(2^N)}\\
    &\le \frac{2^{n-\gamma}}{2^n} + \left( 1-\frac{2^{n-\gamma}}{2^n}\right)\frac{1}{2^{n^{1/3}}} 
    \\
    &
    \le \frac{1}{2^{\gamma}}+\frac{1}{2^{n^{1/3}}} \le \frac{1}{2^{\frac{n^{1/3}}{124}-1}}.
\end{align*}
and that concludes the proof.
\end{proof}

\section{Anonymous Hedonic Games}\label{sec:anonymous}
In this section, we discuss the efficient computation of an \epsfrac\ partition in the case of anonymous HGs, 
and focus on the more general class of $\lambda$-bounded distributions.

The main result of this section is given by the following:

\begin{theorem}\label{thm:mainThmAnonymous}
    Given a $\lambda$-bounded distribution $\distribution$ and a parameter $\delta\in(0,1)$, for any anonymous HG instance and with confidence $1-\delta$, we can efficiently compute an $\epsilon$-fractional core-stable partition for every $\eps \ge \frac{4\lambda}{2^{c(\lambda)\sqrt[3]{n}}}$, where $c(\lambda) = \frac{1}{\sqrt{13 (\lambda +1)}}$.
\end{theorem}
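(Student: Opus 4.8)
\textbf{Proof plan for Theorem~\ref{thm:mainThmAnonymous}.} The strategy mirrors the one used for simple FHGs: identify a small set $\Gr$ of agents that are \emph{green} with respect to the output partition, in the sense that a $\distribution$-sampled coalition containing any such agent core-blocks with probability at most $2^{-\Omega(\sqrt[3]{n})}$, and then argue that a random coalition misses all of $\Gr$ only with exponentially small probability. Because we are now working with $\lambda$-bounded distributions rather than the uniform one, every "fraction of coalitions" estimate must be routed through Lemma~\ref{lem:bartlett}; this is what forces the constant $c(\lambda)=1/\sqrt{13(\lambda+1)}$ and the extra factor $4\lambda$ in front. So the final assembly should read, with $\gamma=\modulus{\Gr}$,
\begin{align*}
\Prob{C \text{ blocks } \pi}{C\sim\distribution}
&\le \Prob{C\cap\Gr=\emptyset}{C\sim\distribution} + \Prob{C\cap\Gr\neq\emptyset \wedge C \text{ blocks }\pi}{C\sim\distribution}\\
&\le \frac{\lambda\, a}{\lambda a + 1 - a} + \gamma\cdot 2^{-\Omega(\sqrt[3]{n})},
\end{align*}
where $a = 2^{n-\gamma}/2^n = 2^{-\gamma}$ is the fraction of coalitions avoiding $\Gr$; bounding $\lambda a/(\lambda a + 1 - a)\le \lambda a = \lambda\,2^{-\gamma}$ and choosing $\gamma = \Theta(\sqrt[3]{n})$ appropriately (with the $13(\lambda+1)$ showing up inside the green-agent analysis, see below) yields the claimed $\eps\ge 4\lambda\,2^{-c(\lambda)\sqrt[3]{n}}$.

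First I would handle the \textbf{learning phase}: by exact-learnability for anonymous HGs it suffices, for each size $s$, that some sampled coalition of size exactly $s$ contains each agent $i$. Here a $\lambda$-bounded distribution could in principle place tiny mass on coalitions of a given size, but by Lemma~\ref{lem:bartlett} every nonempty family of coalitions has probability at least $a/(a+\lambda(1-a))$, so each relevant "$i$ appears in a coalition of size $s$" event has probability bounded below by a quantity polynomial in $1/2^n$ times... — more carefully, the family of size-$s$ coalitions containing $i$ has density $\binom{n-1}{s-1}/2^n$, and for the problematic extreme sizes one uses the structure of the algorithm to avoid needing those sizes at all, or a standard coupon-collector argument over $\mathrm{poly}(n,\log(n/\delta))$ samples suffices because the algorithm only ever needs valuations at $O(\sqrt[3]{n})$ or so carefully chosen sizes. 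I would state this as a lemma analogous to Lemma~\ref{lem:frLearning} and defer the precise sample complexity.

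The \textbf{core of the argument} is the computation phase and the green-agent bound, and this is where I expect the main obstacle. For single-peaked anonymous HGs one can sort agents by peak and greedily build a partition whose blocks realize, for as many agents as possible, a coalition size at or very near their peak — an agent placed exactly at its peak can never be part of a core-blocking coalition at all, and an agent placed within a bounded window of its peak core-blocks only via coalitions of a few specific sizes. The quantitative claim I would aim for: there is a choice of $\Gr$ of size $\gamma = \Theta(\sqrt[3]{n})$ such that each $i\in\Gr$ is given a coalition size $s_i$ for which the only sizes $s$ with $v_i(s) > v_i(\pi)$ form a set whose $\distribution$-probability, \emph{conditioned on $i\in C$}, is $2^{-\Omega(\sqrt[3]{n})}$ — obtained by noting that a $\distribution$-sampled coalition containing $i$ has size concentrated around $n/2$ (Chernoff, plus the $\lambda$-bounded multiplicative slack absorbed into the exponent's constant, which is precisely where $13(\lambda+1)$ enters), so it lands in the small "bad-size" window only with exponentially small probability. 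The delicate points are: (a) making the greedy placement feasible for $\gamma=\Theta(\sqrt[3]{n})$ many agents simultaneously without the blocks overflowing $n$ agents — this is a counting/packing argument showing the total number of agents consumed is $o(n)$, analogous to the $3n/4$ bookkeeping in Lemma~\ref{lem:frCase2}; (b) handling agents whose peak is extreme (near $1$ or near $n$), since the "window around the peak" may be one-sided; and (c) for the general (not single-peaked) anonymous case, reducing to single-peaked or arguing directly that a matching-like / block partition still leaves enough green agents — here one does not have a peak, but one can still pick, for each would-be green agent, a coalition size that beats all its higher-valued sizes except those in a small window, by a pigeonhole over the $n$ possible values. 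I would carry this out as: (1) the learning lemma; (2) a combinatorial lemma constructing $\pi$ and $\Gr$ with $\modulus{\Gr}=\Theta(\sqrt[3]{n})$; (3) the green-agent lemma bounding the conditional blocking probability via Chernoff with the $\lambda$-dependent constant; (4) the assembly above via Lemma~\ref{lem:bartlett}. Step (2)–(3), i.e.\ getting the packing and the concentration constants to line up so that $c(\lambda)=1/\sqrt{13(\lambda+1)}$ comes out exactly, is the real work.
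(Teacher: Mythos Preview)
Your proposal has a genuine gap in the construction of the partition and the set $\Gr$ of green agents. You plan to place each of $\gamma=\Theta(\sqrt[3]{n})$ agents in a coalition whose size is its favorite within a small window around the typical sampled size, and then argue a packing bound ``total number of agents consumed is $o(n)$''. But the concentration interval $I$ for a $\lambda$-bounded distribution is centered near $\mu \in [\frac{n}{\lambda+1},\frac{\lambda n}{\lambda+1}]$, so each agent's favorite size inside $I$ is itself $\Theta(n)$. If the green agents have distinct favorites, you would need $\gamma\cdot\Theta(n)=\Theta(n^{4/3})$ agents to realize all these coalitions --- the packing fails outright. Nothing in your plan forces these favorite sizes to coincide, and for general (non-single-peaked) preferences there is no structural reason they should.

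The paper sidesteps this with a single additional idea you are missing: a pigeonhole over the \emph{sizes} in $I$, not over agents. One first bounds $\modulus{I}\le n\sqrt{13(\lambda+1)\log(4/\eps)/n}$ (this is where the constant $13(\lambda+1)$ actually enters, via the width $2\Delta\mu$ of the Chernoff window plus the $\alpha$-correction from estimating $\mu$ --- not via a per-agent concentration as you suggest). By pigeonhole there is a \emph{single} size $s\in I$ that is the favorite-in-$I$ of at least $n/\modulus{I}\ge \log_2(2\lambda/\eps)$ agents. All of these agents can be placed in coalitions of the \emph{same} size $s$ (partition $N$ into $\lfloor n/s\rfloor$ blocks of size $s$ and one remainder), so there is no packing problem at all. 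With this definition of ``green'' --- agent $i$ is green if $\modulus{\pi(i)}$ maximizes $v_i$ over $I$ --- a green agent has zero incentive to deviate to any coalition whose size lies in $I$, so the blocking probability decomposes as $\Prob{c\notin I}{}+\Prob{\text{no green agent in }C}{}\le \eps/2+\lambda\,2^{-\log_2(2\lambda/\eps)}=\eps$. Your per-agent Chernoff step (3) is unnecessary, and your planned assembly with a $\gamma\cdot 2^{-\Omega(\sqrt[3]{n})}$ term is replaced by this cleaner two-event split.
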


Moreover, when agents' preferences are single-peaked, we can refine the bound on $\eps$ as follows.

\begin{restatable}{theorem}{SpAnonymous}\label{thm:SPanonymous}
    Given a $\lambda$-bounded distribution $\distribution$ and a parameter $\delta\in(0,1)$, for any single-peaked anonymous HG instance and with confidence $1-\delta$, we can efficiently compute an $\epsilon$-fractional core-stable partition for every $\eps\ge 4\cdot\frac{\lambda}{2^{n/4}}$.
\end{restatable}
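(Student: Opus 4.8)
The proof will follow the same three-part template as that of~\Cref{thm:fractional}: a polynomial-time \emph{learning phase} recovering, with confidence $1-\delta$, the information the algorithm needs; a polynomial-time \emph{computation phase} outputting a partition whose family of core-blocking coalitions is provably small; and a probabilistic analysis turning ``few blocking coalitions'' into ``small blocking probability under $\distribution$'' via~\Cref{lem:bartlett}. For Step~1 I would fix a permutation $(s_1,\dots,s_n)$ witnessing single-peakedness and write $\sigma(s)$ for the position of a size $s$ in it. A uniformly random coalition has size distributed as a sum of $n$ fair coin flips, so by~\eqref{eq:chernoff} its size lies outside a symmetric window $W=[\tfrac n2-w,\tfrac n2+w]$ only with probability $e^{-\Omega(w^2/n)}$; combining this with~\Cref{lem:bartlett} (recall that $\lambda$ is constant), I would choose $w$ so that $\Prob{\,|C|\notin W\,}{C\sim\distribution}<\eps/2$, which for $\eps\ge 4\lambda/2^{n/4}$ only needs $w=O(n)$ (and $W=[1,n]$ once $\eps$ is exponentially small). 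It then suffices to build a partition $\pi$ for which at most roughly $2^{3n/4}$ coalitions of size in $W$ core-block $\pi$: by~\Cref{lem:bartlett} such a family has $\distribution$-probability at most $2\lambda\cdot 2^{-n/4}\le\eps/2$, and adding the out-of-window mass $<\eps/2$ finishes. Moreover the algorithm will only ever consult the values $v_i(s)$ with $s\in W$, and since $|W|\le n$ and $\distribution$ is $\lambda$-bounded each such coalition is drawn with probability $\ge 1/(\lambda 2^{n+O(1)})$, so a coupon-collector argument exactly as in~\Cref{lem:frLearning} recovers all of them from $\mathrm{poly}(n,1/\eps,\log(1/\delta))$ samples with confidence $1-\delta$.

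For Step~2 (the partition), for each agent $i$ let $f_i\in W$ maximize $v_i$ over $W$ --- its ``in-window peak''. By single-peakedness $\sigma(f_i)$ is the position in $W$ closest (in the ordering) to $i$'s true peak, so an agent placed in a coalition of size $f_i$ strictly prefers no coalition of size in $W$; the trouble is that we cannot in general give every agent its $f_i$. I would therefore place agents greedily near their in-window peaks: repeatedly take the size $m$ realizing the \emph{median} of $\{\sigma(f_i)\}$ over the not-yet-placed agents, carve off $\lfloor n'/m\rfloor$ disjoint coalitions of size $m$ (where $n'$ is the current number of unplaced agents), filling them with the unplaced agents whose in-window peak is closest in the ordering to $\sigma(m)$, and recurse on the fewer than $m$ leftover agents. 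Each round removes more than half of the remaining agents, so the recursion has depth $O(\log n)$; the procedure is clearly polynomial time and is deterministic given the true valuations, so its output is exact whenever preferences are known in advance.

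For Step~3 (the analysis), set $A_s=\{i:v_i(s)>v_i(\pi(i))\}$ for $s\in W$; a size-$s$ coalition core-blocks $\pi$ exactly when it is contained in $A_s$, so the number of core-blocking coalitions is $\sum_{s\in W}\binom{|A_s|}{s}$. Using single-peakedness along the ordering, an agent placed at size $m$ in some round that lands in $A_s$ must have $\sigma(f_i)$ strictly on the $s$-side of $\sigma(m)$, so by the median choice at most half of the agents active in that round contribute to $A_s$; summing this geometric series over rounds, plus the slack of at most one leftover coalition per round, bounds $|A_s|$ by roughly $\tfrac34 n$ for every $s\in W$. Hence there are at most $\mathrm{poly}(n)\cdot 2^{3n/4}$ in-window core-blocking coalitions (a sharper count, using that only sizes $s\le n/2$ matter once $|A_s|\le n/2$, removes the $\mathrm{poly}(n)$ factor), and Step~1 concludes the argument.

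The delicate point is Step~3, and specifically keeping the cumulative contribution of the leftover coalitions to the sets $A_s$ below $\tfrac14 n$ --- this is exactly what pins the exponent $n/4$ in the statement. A naive single-size partition is not enough: if the chosen size fails to divide $n$ it can strand a constant fraction of the agents at a size far, in the single-peaked ordering, from their peak, inflating some $|A_s|$ close to $n$ and destroying the bound; the median-based choice of size together with the ``closest in-window peak first'' filling rule are what I expect to make the recursion pay off.
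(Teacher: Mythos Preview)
Your Step~3 is where the argument breaks. The inequality you need is $|A_s|\le \tfrac34 n$ for every $s\in W$, and your justification is: in round $j$ at most $n_j/2$ of the active agents have their in-window peak on the $s$-side of $m_j$, hence at most $n_j/2$ of the \emph{placed} agents of round $j$ land in $A_s$; now sum the geometric series. But $\sum_j n_j/2$ is bounded only by $n$, not by $\tfrac34 n$: with $n_{j+1}<n_j/2$ one gets $\sum_j n_j<2n$, so $\sum_j n_j/2<n$, which is vacuous. The ``closest in-window peak first'' rule does not repair this in any obvious way, and your own last paragraph concedes the point. There is also a well-definedness issue: if in some later round all remaining peaks exceed the number of remaining agents, your rule asks for a coalition of infeasible size. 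Finally, your learning paragraph is off by an exponential: a per-coalition lower bound of $1/(\lambda 2^{n+O(1)})$ yields a coupon-collector time of order $\lambda 2^n$, not $\mathrm{poly}(n,1/\eps,\log(1/\delta))$; what one actually uses (as in \Cref{lem:anonymous_learning}) is that every \emph{size} in the high-probability window has mass $\Omega(\eps/n)$.

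More importantly, your premise that ``a naive single-size partition is not enough'' is false, and abandoning it is exactly what makes the paper's proof short. The paper restricts to the learned interval $I\supseteq\Ide$, takes the \emph{single} size $s^*\in I$ that is the median (in the single-peaked ordering) of the agents' restricted peaks, and puts everyone except at most $s^*-1$ leftover agents into coalitions of size $s^*$. The key observation is a dichotomy rather than a per-size bound on $|A_s|$: for any $s\in I$ with $\sigma(s)<\sigma(s^*)$, an agent $i$ in a size-$s^*$ coalition can satisfy $v_i(s)>v_i(s^*)$ only if her restricted peak lies strictly to the left of $s^*$; symmetrically for $\sigma(s)>\sigma(s^*)$. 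Hence every in-window blocking coalition avoids either all of $L'\cup E'$ or all of $G'\cup E'$ (where $L',E',G'$ are the size-$s^*$ agents with peak left of, at, and right of $s^*$), and the median choice together with $|N'|\ge n/2$ forces $\max(|L'\cup E'|,|G'\cup E'|)\ge n/4$. This gives at most $2\cdot 2^{3n/4}$ in-window blocking coalitions directly, with no recursion and no per-$s$ accounting; \Cref{lem:bartlett} then converts this into the $\eps\ge 4\lambda/2^{n/4}$ bound. The leftover coalition of size $r<s^*$ is harmless precisely because the counting is over coalitions avoiding a set of size $\ge n/4$, not over any $|A_s|$.
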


\subsection{Distribution over coalitions sizes}
Being anonymous preferences uniquely determined by coalition sizes, it is important to establish how the $\lambda$-bounded distribution $\distribution$ impacts the probability of sampling coalitions of a given size.

Let $X: 2^N \rightarrow [n] \cup\set{\emptyset}$ be the random variable corresponding to the size of $C\sim\distribution$, i.e. $X(C) := \modulus{C}$, and $\mu:=E[X]$ be its average. The following lemma shows some useful properties of $X$.

\begin{restatable}{lemma}{ChernoffSize}\label{lem:chernoff}
Let $X$ be the random variable representing the size of $C\sim\distribution$, with $\distribution$ $\lambda$-bounded, and let $\mu:=E[X]$.
Then,
\begin{align*}
    \frac{n}{\lambda + 1} \le \mu \le \frac{\lambda n}{\lambda +1}
     && \text{ and } &&
    \Prob{\modulus{X-\mu} \geq \Delta\cdot\mu}{\distribution} \leq \frac{\eps}{2} \, ,
\end{align*}
     where $\eps$ is such that $0<\eps<1$ and  $\Delta$ is the quantity $\sqrt{\frac{3(\lambda + 1) \log{\frac{4}{\eps}} }{n}}$.
\end{restatable}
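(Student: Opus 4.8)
The plan is to prove the two claims in sequence. For the bound on $\mu$, I would express $\mu = \sum_{k=1}^n k \cdot \Pr[X = k]$ and compare $\distribution$ with the uniform distribution $U(2^N \setminus \{\emptyset\})$, whose size-random-variable has mean essentially $n/2$ (more precisely, $\sum_k k \binom{n}{k}/(2^n-1)$). Since $\distribution$ is $\lambda$-bounded, for every coalition $S$ we have $\frac{1}{\lambda}\cdot \frac{1}{2^n - 1} \le \Pr[C = S] \le \lambda \cdot \frac{1}{2^n - 1}$ after summing the defining inequality appropriately — indeed, summing $\Pr[C = S_1] \le \lambda \Pr[C = S_2]$ over all $S_2$ gives $\Pr[C = S_1] \le \lambda/(2^n-1)$, and symmetrically the lower bound. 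Hence $\Pr[X = k]$ lies within a factor $\lambda$ of the uniform probability $\binom{n}{k}/(2^n-1)$, which would give $\mu$ within a factor $\lambda$ of $n/2$, i.e.\ $\frac{n}{2\lambda} \le \mu \le \frac{\lambda n}{2}$. To reach the sharper stated bound $\frac{n}{\lambda+1} \le \mu \le \frac{\lambda n}{\lambda+1}$, I would instead pair up each coalition $S$ with its complement $N \setminus S$: for any $S$ with $|S| = k$, the pair $\{S, N\setminus S\}$ contributes sizes $k$ and $n-k$ summing to $n$, and $\lambda$-boundedness forces $\Pr[C=S] \le \lambda \Pr[C = N\setminus S]$, so the conditional expectation of $X$ restricted to this pair is between $\frac{k + \lambda(n-k)}{1+\lambda}$-type expressions; averaging these pairwise bounds (and handling the self-complementary issue, which does not arise since $n$ coalitions of each size and complementation is a fixed-point-free involution on nonempty proper subsets when... actually $N$ itself pairs with $\emptyset$, excluded, so complementation is a fixed-point-free involution on $2^N \setminus \{\emptyset, N\}$, and $\{N\}$ is a leftover singleton class contributing size $n$) yields $\mu \ge \frac{n}{\lambda+1}$ and $\mu \le \frac{\lambda n}{\lambda+1}$ after a short computation.

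For the concentration bound, the idea is to apply the Chernoff bound \eqref{eq:chernoff} to $X = \sum_{i=1}^n X_i$ where $X_i = \ones[i \in C]$. Here is the subtlety: the $X_i$ are \emph{not} independent for a general $\lambda$-bounded $\distribution$, so \eqref{eq:chernoff} does not apply literally. I expect the intended route is one of: (a) observe that the statement of \eqref{eq:chernoff} as quoted is being invoked in the regime where it does hold — e.g.\ the authors may intend to first reduce to the uniform distribution (for which the $X_i$ \emph{are} independent fair coin flips) and then transfer the tail bound to $\distribution$ via $\lambda$-boundedness, paying a factor $\lambda$; or (b) use a Chernoff-type bound valid for negatively associated variables. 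The cleanest is (a): under $U(2^N)$, $X = \sum X_i$ with independent $X_i \sim \mathrm{Bernoulli}(1/2)$, mean $n/2$; set $\Delta' $ appropriately and get $\Pr_{U(2^N)}[|X - n/2| \ge \Delta' \cdot (n/2)] \le 2e^{-(n/2)\Delta'^2/3}$. Then for any event $A \subseteq 2^N$, $\lambda$-boundedness gives $\Pr_\distribution[A] \le \lambda \cdot \frac{2^n}{2^n - 1}\Pr_{U}[A] \le 2\lambda \Pr_U[A]$ (for $n \ge 1$), so the tail under $\distribution$ is at most $2\lambda \cdot 2 e^{-(n/2)\Delta'^2/3}$. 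One then needs to translate between the deviation $\Delta \mu$ (with $\mu$ possibly as small as $n/(\lambda+1)$) and the deviation measured in multiples of $n/2$; since $\mu \le \lambda n/(\lambda+1) \le \lambda n$, a deviation of $\Delta \mu$ is at most a deviation of $\Delta \lambda n$, but we want the \emph{lower} comparison: $\{|X - \mu| \ge \Delta \mu\}$ — because $\mu$ and $n/2$ differ by at most a factor $\lambda$, and are within $\lambda$ of each other, one can absorb this into constants. Plugging $\Delta = \sqrt{\frac{3(\lambda+1)\log(4/\eps)}{n}}$ and chasing the constants (the $\lambda+1$ in $\Delta^2 \cdot n = 3(\lambda+1)\log(4/\eps)$ is exactly what is needed to kill the factor from $\mu \approx n/(\lambda+1)$ in the exponent and leave $e^{-\log(4/\eps)} = \eps/4$, then the prefactor $4\lambda$ — wait, the stated bound is $\eps/2$, so the constants must be arranged so that $4\lambda \cdot e^{-\ldots}$, no) — I would carefully track the $4$ versus $2$ and the $\lambda$ prefactor to land exactly on $\eps/2$; most likely the exponent is designed to beat the prefactor with room to spare, e.g.\ using $e^{-c(\lambda)\mu\Delta^2}$ with the constant chosen so that even after multiplying by $2\lambda$ (and a union bound factor $2$ for the two tails) the result is $\le \eps/2$.

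The main obstacle is the \textbf{dependence structure}: justifying a Chernoff-type tail for $X$ under a general $\lambda$-bounded $\distribution$ rather than the uniform one. The safe and, I believe, intended fix is the two-step "prove it for uniform, then transfer with a factor $\lambda$" argument sketched above, which works precisely because $\lambda$-boundedness is a pointwise multiplicative comparison to uniform and therefore transfers \emph{any} event-probability bound up to the factor $\lambda$ (times $2^n/(2^n-1) \le 2$). The only real care needed is bookkeeping: making sure the $(\lambda+1)$ factor inside $\Delta^2$ compensates for $\mu$ being as small as $n/(\lambda+1)$ in the exponent $\mu\Delta^2 = \Theta(\log(4/\eps))$, and that the multiplicative constants $4\lambda$ (transfer factor $2\lambda$ times union-bound factor $2$) are dominated, leaving $\eps/2$. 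I would write $\Pr_\distribution[|X-\mu| \ge \Delta\mu] \le 2\lambda\big(\Pr_{U}[X - n/2 \ge \tfrac{\Delta\mu - |\mu - n/2|}{n/2}\cdot \tfrac{n}{2}] + (\text{lower tail})\big)$ and bound each uniform tail by $e^{-\Omega((\lambda+1)\log(4/\eps))} = (\eps/4)^{\Omega(\lambda+1)} \le \eps/(8\lambda)$ for $\eps$ small and $\lambda \ge 1$, finishing the claim.
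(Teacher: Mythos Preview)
Your approach differs from the paper's in both parts.

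For the mean bound, the paper's argument is much more direct than your complementation pairing: it writes $\mu = \sum_{i \in N} p_i$ where $p_i := \Pr_{\distribution}[i \in C]$, and applies Lemma~\ref{lem:bartlett} to the family $\{S : i \in S\}$ of size $2^{n-1}$ (so $a = 1/2$) to obtain $\tfrac{1}{\lambda+1} \le p_i \le \tfrac{\lambda}{\lambda+1}$ immediately; summing over $i$ gives the stated bounds in one line. Your pairing argument would also work but is more roundabout.

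For the tail bound, the paper simply writes $X = \sum_{i \in N} X_i$ with $X_i$ the indicator of $i \in C$, applies the Chernoff bound~\eqref{eq:chernoff} directly to get $\Pr_{\distribution}[\,|X-\mu| \ge b\mu\,] \le 2e^{-b^2\mu/3} \le 2e^{-b^2 n/(3(\lambda+1))}$ (using the lower bound on $\mu$ just derived), and then sets $b = \Delta$ to land exactly on $\eps/2$. You correctly flag that the $X_i$ need \emph{not} be independent under a general $\lambda$-bounded $\distribution$, so the hypothesis of~\eqref{eq:chernoff} is not literally satisfied; the paper's proof glosses over this point. Your transfer-from-uniform strategy --- prove the tail under the uniform distribution (where the $X_i$ \emph{are} i.i.d.\ fair coins) and then pass to $\distribution$ via the pointwise bound $\Pr_{\distribution}[A] \le \lambda \Pr_{U}[A]$ --- is a sound way to make the argument fully rigorous, at the cost of extra bookkeeping for the shift between $\mu$ and $n/2$ and an additional $\lambda$ prefactor to absorb into the exponent. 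In short: the paper's route is shorter and yields the exact constants in the statement, but leaves independence unaddressed; yours is longer but closes that gap.
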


Let us denote by $\Ide \subseteq [n]$ the open interval $((1-\Delta)\mu, (1+\Delta)\mu)$, where $\mu$ is the expected value for the size of a coalition sampled from $\distribution$.
Lemma~\ref{lem:chernoff} implies that with probability at least $1-\epsilon/2$ we draw from $\distribution$ a coalition whose size is in $\Ide$.

\subsection{Properties of \texorpdfstring{$\Ide$}{the interval}}

The interval $\Ide$ will play a central role in the computation of an \epsfrac\ partition. However, under the uncertainty of a distribution $\distribution$, our algorithms need to i) estimate it (in fact, $\Ide$ is uniquely determined by the value $\mu$ which is unknown) and ii) learn exactly the agents' valuations for coalitions whose size is in $\Ide$.
The technical proofs of this subsection are deferred to the Appendix.

Being $\mu$ unknown, 
we will work on a superset of $\Ide$ which can be estimated by simply knowing that $\distribution$ is $\lambda$-bounded.
Let $\sample =\{S_j\}_{j=1}^m$ be a sample of size $m$ drawn from $\distribution$, and let $\bar{\mu} = \frac{1}{m}\sum_j \modulus{S_j}$ be the frequency estimator.
By the Hoeffding bound (see Theorem 4.13 in~\cite{bookProbComp}), it is possible to show that we can estimate $\mu$ with high confidence, as stated in the following.

\begin{restatable}{lemma}{confidencemu}\label{lem:confidence_mu}
 Given any two constants $\alpha>0, \delta <1$, if $m> \frac{n^2\log{2/\delta}}{2 \eps^2}$, then:
\begin{equation}
    \Prob{\modulus{\bar{\mu} -\mu}< \alpha}{\sample \sim \distribution^m}\ge 1-\delta \; .
\end{equation}
\end{restatable}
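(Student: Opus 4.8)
The plan is to apply the Hoeffding bound to the frequency estimator $\bar\mu = \frac{1}{m}\sum_{j=1}^m \modulus{S_j}$, which is the average of $m$ i.i.d. random variables $\modulus{S_j}$, each taking values in the bounded range $[0,n]$ (or $[1,n]$ for nonempty coalitions) and each having expectation $\mu$. First I would note that, since the $S_j$ are drawn independently from $\distribution$, the variables $\modulus{S_1},\dots,\modulus{S_m}$ are independent, so $\bar\mu$ is an average of independent bounded random variables with $E[\bar\mu] = \mu$.

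Next I would invoke the two-sided Hoeffding inequality (Theorem 4.13 in~\cite{bookProbComp}): for independent random variables $Y_j \in [a_j, b_j]$ with $Y = \sum_j Y_j$,
\[
\Prob{\modulus{Y - E[Y]} \ge t}{} \le 2\exp\!\left(\frac{-2t^2}{\sum_{j}(b_j - a_j)^2}\right).
\]
Applying this with $Y_j = \modulus{S_j} \in [0,n]$, so $b_j - a_j = n$ and $\sum_j (b_j-a_j)^2 = m n^2$, and with $t = \alpha m$ (since $\bar\mu - \mu = \frac{1}{m}(Y - E[Y])$, the event $\modulus{\bar\mu - \mu} \ge \alpha$ is the event $\modulus{Y - E[Y]} \ge \alpha m$), I get
\[
\Prob{\modulus{\bar\mu - \mu} \ge \alpha}{\sample \sim \distribution^m} \le 2\exp\!\left(\frac{-2\alpha^2 m^2}{m n^2}\right) = 2\exp\!\left(\frac{-2\alpha^2 m}{n^2}\right).
\]

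Finally, I would check that the hypothesis $m > \frac{n^2 \log(2/\delta)}{2\eps^2}$ makes the right-hand side at most $\delta$. Here there is a small notational point to reconcile: as stated the lemma fixes a constant $\alpha>0$ but the sample-size bound involves $\eps$ rather than $\alpha$; the intended reading is that one takes $\alpha = \eps$ (or more generally $\alpha \ge \eps$), in which case $m > \frac{n^2\log(2/\delta)}{2\alpha^2}$ gives $\frac{2\alpha^2 m}{n^2} > \log(2/\delta)$, hence $2\exp(-2\alpha^2 m/n^2) < 2\cdot \frac{\delta}{2} = \delta$, and therefore $\Prob{\modulus{\bar\mu - \mu} < \alpha}{\sample \sim \distribution^m} > 1 - \delta$, as claimed. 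The only mild obstacle is this bookkeeping between $\alpha$ and $\eps$ and the exact form of the range of $\modulus{S_j}$ (whether $[0,n]$ or $[1,n]$, which only changes the bound by a harmless constant factor in the range width); the probabilistic content is a completely routine one-line application of Hoeffding, with no real difficulty.
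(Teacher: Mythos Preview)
Your proposal is correct and follows exactly the same approach as the paper: a direct one-line application of the Hoeffding bound (Theorem~4.13 in~\cite{bookProbComp}) to the bounded i.i.d.\ variables $\modulus{S_j}\in[0,n]$. You even correctly flagged the $\alpha$/$\eps$ mismatch, which is present verbatim in the paper's own proof as well (the paper writes $2e^{-2\eps^2 m/n^2}<\delta$ without comment), so your reconciliation is entirely appropriate.
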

As a consequence, we can determine a good superset of $\Ide$ as the interval with extreme points $\left(1 \pm\Delta\right)(\bar{\mu} \pm \alpha)$.

We now turn our attention to the exact learning of $\Ide$.

\begin{restatable}{lemma}{learningExactAnonymous}\label{lem:anonymous_learning}
By sampling $m=\frac{ 2\lambda(1+\lambda)n^2\log{n^2/\delta}}{\eps}$ sets from $\distribution$ it is possible to learn exactly the valuations in $\Ide$, with confidence $1-\delta$.
\end{restatable}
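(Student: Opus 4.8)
The plan is to prove Lemma~\ref{lem:anonymous_learning} by a union-bound coverage argument: a sample suffices to learn exactly the valuations in $\Ide$ as soon as, for every agent $i \in N$ and every size $s \in \Ide$, there is at least one sampled coalition $S_j$ with $i \in S_j$ and $\modulus{S_j}=s$ (then anonymity lets us read off $v_i(s)$ directly). So the task reduces to bounding the probability that some (agent, size)-pair in $N \times \Ide$ is never witnessed in $m$ i.i.d.\ draws from $\distribution$.

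First I would lower-bound the probability that a single draw $C\sim\distribution$ is a witness for a fixed pair $(i,s)$, i.e.\ that $i\in C$ and $\modulus{C}=s$. The event $\{i\in C,\ \modulus{C}=s\}$ is the union over all $\binom{n-1}{s-1}$ coalitions of size $s$ containing $i$; by $\lambda$-boundedness each such coalition has probability at least $\frac{1}{\lambda 2^n}$ (the minimum-probability coalition has probability $\ge \tfrac{1}{\lambda}$ times any other, and they sum to $1$ over the $2^n-1$ nonempty coalitions, so the smallest is $\ge \tfrac{1}{\lambda 2^n}$ up to the empty-set adjustment; I will state it cleanly). Hence
\[
\Prob{i\in C,\ \modulus{C}=s}{C\sim\distribution} \ \ge\ \frac{\binom{n-1}{s-1}}{\lambda\,2^n}.
\]
Now I would bound $\binom{n-1}{s-1}$ from below for $s\in\Ide$. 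Since $\Ide=((1-\Delta)\mu,(1+\Delta)\mu)$ with $\mu\in[\tfrac{n}{\lambda+1},\tfrac{\lambda n}{\lambda+1}]$ (Lemma~\ref{lem:chernoff}) and $\Delta\to 0$, every $s\in\Ide$ is of the form $\beta n$ with $\beta$ bounded away from $0$ and $1$ by a constant depending only on $\lambda$; so $\binom{n-1}{s-1}\ge 2^{n(1-o(1))}/\mathrm{poly}(n)$, and in particular $\binom{n-1}{s-1}\ge \frac{\eps\,2^n}{\lambda(1+\lambda)n}$ for $n$ large, using that the entropy $H(\beta)$ is at least some constant $>0$ — actually, to match the stated sample size I would instead be slightly less wasteful and keep the estimate $\binom{n-1}{s-1}\ge \binom{n-1}{\lfloor\mu\rfloor-1}\cdot(\text{correction})$, but the upshot is a per-pair success probability $p_{i,s}\ge \frac{\eps}{\lambda(1+\lambda)n}$ (this is the clean bound that plugs into the sample complexity).

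Then I would finish with the standard coupon-type estimate: the probability that a fixed pair $(i,s)$ is missed by all $m$ samples is at most $(1-p_{i,s})^m\le e^{-p_{i,s}m}$, and there are at most $n\cdot n = n^2$ pairs, so by a union bound the failure probability is at most $n^2 e^{-\eps m/(\lambda(1+\lambda)n)}$. Setting $m=\frac{2\lambda(1+\lambda)n^2\log(n^2/\delta)}{\eps}$ makes the exponent $-\eps m/(\lambda(1+\lambda)n) = -2n\log(n^2/\delta)\le -\log(n^2/\delta)$ for $n\ge 1$, which gives failure probability $\le n^2\cdot \frac{\delta}{n^2}=\delta$, as required; on the complementary event of probability $\ge 1-\delta$ every valuation in $\Ide$ is learned exactly.

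The main obstacle — and the only step needing care — is the lower bound on $\binom{n-1}{s-1}$ uniformly over $s\in\Ide$: one must use that $\Ide$ consists of sizes that are a \emph{constant fraction} of $n$ (guaranteed because $\mu$ is $\Theta(n)$ by Lemma~\ref{lem:chernoff} and $\Delta=o(1)$), so that the relevant binomial coefficient is exponentially large (base $2$ up to lower-order and polynomial factors), which is exactly what lets the $2^n$ in the denominator of the $\lambda$-boundedness bound cancel and leaves a per-pair probability that is only polynomially small in $n$ (times $\eps$). Everything else (the $\lambda$-boundedness per-coalition bound, the union bound over the $\le n^2$ pairs, and the choice of $m$) is routine. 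I would also remark that, having learned the valuations for all sizes in $\Ide$, together with Lemma~\ref{lem:confidence_mu} — which locates $\Ide$ inside a known, efficiently computable superset — the algorithm indeed ``exactly learns'' the family of coalitions whose size lies in $\Ide$ in the sense of Section~\ref{sec:preliminaries}.
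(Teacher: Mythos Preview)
Your outer framework --- cover every pair $(i,s)\in N\times\Ide$, bound the miss probability by $(1-p_{i,s})^m\le e^{-p_{i,s}m}$, and union-bound over the at most $n^2$ pairs --- is exactly what the paper does. The gap is in your lower bound on $p_{i,s}$. From Lemma~\ref{lem:chernoff} you only know $\mu\in[\tfrac{n}{\lambda+1},\tfrac{\lambda n}{\lambda+1}]$, so for $\lambda>1$ the ratio $\beta=s/n$ may sit a \emph{constant} distance from $\tfrac12$; in that case $H(\beta)<1$ strictly and $\binom{n-1}{s-1}=2^{H(\beta)n+o(n)}$ is exponentially smaller than $2^n$, not $2^{n(1-o(1))}/\mathrm{poly}(n)$ as you assert. ``$\beta$ bounded away from $0$ and $1$'' yields only $\binom{n-1}{s-1}\ge 2^{cn}$ for some $c=c(\lambda)\in(0,1)$, so your per-coalition estimate $\binom{n-1}{s-1}/(\lambda\,2^n)$ is potentially $2^{-\Theta(n)}$, forcing an exponential sample size; the clean target $p_{i,s}\ge \eps/(\lambda(1+\lambda)n)$ you write down simply does not follow from the counting argument you sketch.

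The paper sidesteps the binomial coefficient entirely. It first invokes Lemma~\ref{lem:chernoff} to get $\Pr_{\distribution}[\modulus{C}\in\Ide]\ge 1-\eps/2$, then compares sizes within $\Ide$ via $\lambda$-boundedness to conclude $\Pr[\modulus{C}=s]\ge \eps/(2\lambda n)$ for every $s\in\Ide$, and finally applies Lemma~\ref{lem:bartlett} to the conditional distribution on $\{\modulus{C}=s\}$ (which is still $\lambda$-bounded) to obtain $\Pr[i\in C\mid \modulus{C}=s]\ge \tfrac{s}{s+\lambda(n-s)}\ge \tfrac{1}{(1+\lambda)n}$. Multiplying gives $p_{i,s}\ge \eps/(2\lambda(1+\lambda)n^2)$, after which your union-bound computation with the stated $m$ goes through verbatim. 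The idea you are missing is that the factor $\eps$ in the per-pair lower bound has to come from the concentration of $\modulus{C}$ in $\Ide$, not from an estimate of $\binom{n-1}{s-1}/2^n$.
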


\subsection{Computing an \texorpdfstring{$\eps$}{epsilon}-fractional core-stable partition for bounded distributions}

Our algorithm will consist of a learning phase and a computation phase.

During the learning phase, the algorithm passes through a sample of size $m=\frac{ 2\lambda(1+\lambda)n^2\log{n^2/\delta}}{\eps}$ and for each coalition $C$ in the sample of size $c$ and $i\in C$ it stores the value $v_i(c)$. 
Let us denote by $\mathcal{X}$ the coalition sizes that have been learned exactly during this learning phase, that is, the coalition sizes for which the algorithm learned the valuations of each agent. By the previous lemma, with confidence $1-\delta$, $\Ide\subseteq \mathcal{X}$.

During the computation phase, our algorithm will make sure that a sufficiently large amount of agents have a very small probability of being involved in a core blocking coalition. Such agents will be called green agents and are defined as follows.
An agent $i$ is said to be \emph{green} in a partition $\pi$ w.r.t. $I\subseteq [n]$, if $\modulus{\pi(i)}$ maximizes $v_i(s)$ for $s\in I$. We will show that if the probability of sampling coalitions of sizes $s \in I$ is high enough, then green agents do not want to deviate from their actual partition with high probability as well. As a consequence, a partition containing many green agents is difficult to core-block, as shown in the following:

\begin{lemma}\label{lem:anonymous_green}
    For any partition $\pi$ of agents and $I \subseteq [n]$ such that $\Ide \subseteq I$. If $\pi$ contains at least $\log_2{\frac{2\lambda}{\eps}}$ green agents w.r.t.\ $I$, then, $\pi$ is $\eps$-fractional core-stable. 
\end{lemma}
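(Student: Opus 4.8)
The plan is to control $\Pr_{C\sim\distribution}[C\text{ core-blocks }\pi]$ by a union bound over two ``bad'' events for the sampled coalition $C$: that $\modulus{C}\notin I$, and that $C$ contains none of the green agents. Write $\mathcal{G}_\pi$ for the set of green agents of $\pi$ w.r.t.\ $I$, so by hypothesis $\modulus{\mathcal{G}_\pi}\ge g:=\log_2\frac{2\lambda}{\eps}$.

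The first step is to observe that every core-blocking coalition is indeed bad in this sense. Suppose $C$ core-blocks $\pi$, contains a green agent $i\in C\cap\mathcal{G}_\pi$, and satisfies $\modulus{C}\in I$. By anonymity $v_i(C)=v_i(\modulus{C})$, and since $i$ is green, $\modulus{\pi(i)}$ maximizes $v_i(s)$ over $s\in I$, hence $v_i(\modulus{C})\le v_i(\modulus{\pi(i)})=v_i(\pi)$, contradicting $v_i(C)>v_i(\pi)$. Therefore any core-blocking $C$ satisfies $\modulus{C}\notin I$ or $C\cap\mathcal{G}_\pi=\emptyset$. For the first of these, since $\Ide\subseteq I$ we have $\{\modulus{C}\notin I\}\subseteq\{\modulus{C}\notin\Ide\}$, and Lemma~\ref{lem:chernoff} directly bounds $\Pr_{C\sim\distribution}[\modulus{C}\notin\Ide]=\Pr[\modulus{X-\mu}\ge\Delta\mu]\le\eps/2$.

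For the second event, I would let $\FFF$ be the family of coalitions disjoint from $\mathcal{G}_\pi$; since such coalitions are subsets of $N\setminus\mathcal{G}_\pi$ we get $\modulus{\FFF}\le 2^{\,n-\modulus{\mathcal{G}_\pi}}\le 2^{\,n-g}$, so $a:=\modulus{\FFF}/2^n\le 2^{-g}=\eps/(2\lambda)$. Applying Lemma~\ref{lem:bartlett} together with $\lambda\ge1$ gives $\Pr_{C\sim\distribution}[C\in\FFF]\le\frac{\lambda a}{\lambda a+1-a}\le\lambda a\le\eps/2$, using the elementary inequality $\frac{\lambda a}{\lambda a+1-a}\le\lambda a$, which holds because $(\lambda-1)a\ge0$. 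A union bound over the two events then yields $\Pr_{C\sim\distribution}[C\text{ core-blocks }\pi]\le\eps/2+\eps/2=\eps$; the inequality is in fact strict since the coalitions in $\FFF$ are nonempty, so $\modulus{\FFF}\le 2^{\,n-g}-1$ and thus $a<2^{-g}$. No step is a genuine obstacle here: once Lemmas~\ref{lem:chernoff} and~\ref{lem:bartlett} are in hand, the argument is purely an accounting one, and the only point demanding a little care is feeding the right value of $a$ into Lemma~\ref{lem:bartlett}.
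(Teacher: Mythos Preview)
Your proof is correct and follows essentially the same approach as the paper: both argue that a core-blocking coalition must either have size outside $I$ or miss all green agents, bound the first probability by $\eps/2$ via Lemma~\ref{lem:chernoff}, and bound the second by $\lambda\cdot 2^{-g}=\eps/2$ via Lemma~\ref{lem:bartlett}. Your added remark that the inequality is actually strict (because the empty set is not a coalition) is a nice touch the paper omits.
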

\begin{proof}
Let $C\sim\distribution$ and $c=\modulus{C}$ be its size.
We denote by $\mathcal{E}_1$ the event that $C$ core blocks $\pi$, and  by $\mathcal{E}_2$ then the event that $C$ does not contain green agents. By definition of green agents, if $c \in I$ and $C$ contains at least one green agent w.r.t. $I$, then it cannot core block $\pi$. Formally speaking, $ \overline{\mathcal{E}_2} \wedge c \in I \Rightarrow \overline{\mathcal{E}_1}$ and therefore $\mathcal{E}_1 \Rightarrow \mathcal{E}_2 \vee c \notin I$. As a consequence,  
\begin{equation*}\label{eq:probCoreBlock}
    \Prob{\mathcal{E}_1}{\distribution} \le \Prob{c \notin I \vee \mathcal{E}_2}{\distribution}
    \le \Prob{c \notin I}{\distribution} + \Prob{\mathcal{E}_2}{\distribution} \; .
\end{equation*}
Being $\Ide\subseteq I$, by Lemma~\ref{lem:chernoff} and Lemma~\ref{lem:bartlett} we finally get
$    \Prob{\mathcal{E}_1}{\distribution}  \le \frac{\eps}{2} + \lambda \frac{2^{n-\log_2{\frac{2\lambda}{\eps}}}}{2^n} = \eps$.
\end{proof}

We finally sketch the proof of Theorem~\ref{thm:mainThmAnonymous}, for a rigorous proof see the Appendix. The key idea is to create a partition $\pi$ by grouping as many agents as possible in a coalition of preferred size among a set of sizes, say $I$, that may occur with particularly high probability. Hence, by Lemma~\ref{lem:anonymous_green} we only need to find appropriate $I$ and $\pi$ such that the number of green agents for $\pi$ w.r.t.\ $I$ is at least $\log_2{\frac{2\lambda}{\eps}}$.

\begin{proof}[Proof of Theorem~\ref{thm:mainThmAnonymous}]
Let us start by defining the set $I$. By Lemma~\ref{lem:confidence_mu}, we can provide an estimation $\bar{\mu}$ of $\mu$ such that $\mu \in (\bar{\mu}
 - \alpha, \bar{\mu}+\alpha)$ for a certain $\alpha$ with confidence $1-\delta$. As a consequence, if we can consider $I$ as the intersection of the interval having extreme points $\left(1 \pm \Delta \right)(\bar{\mu} \pm \alpha)$ and the set $\mathcal{X}$ of sizes exactly learned during the learning phase. By Lemma~\ref{lem:anonymous_learning}, with confidence $1-\delta$,  we can say that $\Ide \subseteq \mathcal{X}$ since with such confidence $\Ide$ has been learned. By union bound, with confidence $1-2\delta$, $\Ide \subseteq I$. W.l.o.g. we can assume our confidence to be $1-\delta$ by replacing $\delta$ with $\delta/2$.
 Moreover, being $\modulus{I} \le 2 \left( \Delta \bar{\mu} + \alpha \right)$ and $\bar{\mu}\geq 1$ and choosing $\alpha = \min\set{\frac{1}{2\sqrt{n}}, \frac{n}{\lambda+1}}$ we can derive the following bound: 
\begin{equation*}
\modulus{I} \leq n\sqrt{\frac{13(\lambda+1) \log{\frac{4}{\eps}}}{n}} \, .
\end{equation*}
By the pigeonhole principle, there is one $s \in I$ such that at least $\frac{n}{\modulus{I}}$ agents prefer $s$ over the other coalition sizes in $I$.
Using the hypothesis on $\eps$, we can conclude $ \frac{n}{\modulus{I}} \ge \log_2{\frac{2\lambda}{\eps}}$.

Let us now create the desired $\pi$.
Let $q,r$ be two positive integers s.t.\ $n= qs+r$ and $r<s$.
Giving priorities to the agents having $s$ as the most preferred size in $I$, we create $q$ coalitions of size $s$ and a coalition of size $r$. 
In this way, at least $\log_2{\frac{2\lambda}{\eps}}$ agents are in a coalition having the most preferred size within $I$ and, therefore, they are green agents for $\pi$ w.r.t.\ $I$. By Lemma~\ref{lem:anonymous_green}, the thesis follows.
\end{proof}

Finally, we turn our attention to a special but really popular subclass of anonymous HGs. In this case, thanks to the further assumption of single-peakedness of agents' preferences we can achieve better values of $\epsilon$. Because of space constraints, the proof of Theorem~\ref{thm:SPanonymous} is deferred to the Appendix.

\section{Conclusions and Future Work}
We introduced and studied the $\eps$-fractional core stability concept, a natural relaxation of core stability where only a small ratio (or a small probability mass) of coalitions is allowed to core-block.

We investigated this concept on two fundamental classes of HGs: Simple FHGs and anonymous HGs. For both these classes the problem of deciding the existence of a core-stable partition is notably hard: NP-complete for anonymous HGs and even $\Sigma^p_2$- complete for simple FHGs.
While in Section~\ref{sec:impossibility} we showed that very small value of $\eps$ and the choice of the distributions pose limits to the existence of $\eps$-FC solutions, we have still been able to obtain positive results for both classes under different assumptions on the considered distributions.
For simple FHGs we showed that, when sampling from a uniform distribution (that corresponds to $\eps$-FC as in Definition~\ref{def:unif_efc}), it is possible to always construct an $\eps$-FC under the assumption that $\eps$ does not decrease faster than a sub-exponential value. For anonymous HGs instead, we were able to show the existence of $\eps$-FC for a much broader class of distributions (i.e.\ the $\lambda$-bounded ones), under a similar condition over $\eps$.
These encouraging results show that, while having a natural probabilistic nature, the notion of $\eps$-FC stability is flexible enough to allow positive results in very complex settings where core-stability is usually considered unattainable. Moreover, its very definition and connection with the concept of PAC stabilizability makes it resilient to possible uncertainty of agents' preferences, increasing its usability in applications.
\paragraph{Limitations of $\eps$-fractional core stability.}
Beyond the core, many stability concepts have been introduced and studied in the literature of HGs. Among the others, we mention individual rationality (IR) which is considered the minimum requirement of stability: It postulates that no agent strictly prefers being alone, forming a singleton, rather than being in their current coalition. Clearly, core stability implies IR; in fact, if the outcome is not IR there must exist an agent able to form a blocking coalition on her own. Despite IR being implied by core-stability, this is not the case for $\eps$-FC and the algorithms presented in this paper do not necessarily satisfy IR.
Specifically, in the case of simple fractional HGs, any outcome is IR and therefore it is our proposed solution; on the other hand, in the case of anonymous HGs, our algorithm provides an IR solution if coalitions of size $1$ are less preferred than any other coalition size. That said, we believe it is possible to modify our algorithm, under the reasonable assumption that the value of the singleton coalition is known for each agent, in a way that also IR is satisfied, at the cost of a possible worse lower bound on $\eps$.

\paragraph{Future directions.}
This contribution has a high potential for future work. A first natural question is whether it is possible to extend our positive results for simple FHGs to the more general class of $\lambda$-bounded distributions. We believe that such an extension is attainable but nonetheless non-trivial.
Moreover, although we showed that exponentially small values of $\eps$ are not possible, it is still worth investigating which is the best guarantee for general distributions.
More broadly, there are several HG classes that have not been considered in our work, and understanding for which $\eps$ an \epsfrac\ partition exists is certainly of interest. As discussed in the previous paragraph, it would be definitely of interest to come up with an algorithm for anonymous HGs guaranteeing the solutions to be IR. Generally speaking, it would be worth studying $\eps$-FC in conjunction with IR in other HGs classes.

\newpage

\section*{Acknowledgements}
We acknowledge the support of the PNRR MIUR project FAIR -  Future AI Research (PE00000013), Spoke 9 - Green-aware AI, the PNRR MIUR project VITALITY  (ECS00000041), Spoke 2  ASTRA - Advanced Space Technologies and Research Alliance, the Italian MIUR PRIN 2017 project ALGADIMAR - Algorithms, Games, and Digital Markets (2017R9FHSR\_002) and the DFG, German Research Foundation, grant (Ho 3831/5-1). 

We thank the anonymous reviewers for their insightful comments and suggestions,  which helped us to improve the quality of the manuscript.

\bibliography{references}
\newpage
\appendix
\section{Impossibility results -- missing proofs}
Hereafter, we report all the proofs omitted from Section~\ref{sec:impossibility}.
We start by proving the following corollary of Proposition~\ref{prop:fracnoEpsFract} on simple fractional HGs.

\impossibilityBoundedFractional*
\begin{proof}
    Consider the instance $\instance'$ in the proof of Proposition~\ref{prop:fracnoEpsFract}. We have shown that there exists a collection of coalitions $\mathcal{F}=2^{N}\cup\set{\set{41, \dots, n}}\setminus\set{\emptyset}$ of size $2^{40}$ that always contains a core-blocking coalition for any possible partition $\pi$.
    
    Let $\distribution$ be a bounded distribution of parameter $\lambda$ such that, for some $p>0$, $\Prob{C}{C\sim \distribution}= p$ if $C\in \mathcal{F}$ and $p/\lambda$, otherwise. For such a distribution the probability of sampling a specific coalition in $\mathcal{F}$ is therefore given by $p = \frac{\lambda }{2^{40}(\lambda -1)+ 2^n}$. In conclusion,
    \begin{align*}
        \Prob{C \text{ core-blocks } \pi}{C\sim\distribution} \geq  \Prob{C \text{ core-blocks } \pi \,\wedge\,  C \in \mathcal{F}}{C\sim\distribution}\geq\frac{\lambda }{2^{40}(\lambda -1)+ 2^n} 
    \end{align*}
    and the thesis follows.
\end{proof}

In the following, we will prove the results regarding anonymous HGs. The approach is similar to the one used for fractional HGs, and it is based on a single-peaked instance with an empty core shown in~\cite{Banerjee01}.

\anonySPnoEpsFract*

\begin{proof}
Anonymous hedonic games have been shown to have an empty core even under single-peaked preferences~\cite{Banerjee01}.
Such an example considers an instance $\instance$ with seven agents whose preferences are single-peaked with respect to the natural ordering $1\dots7$. For our purposes, it is not important how the instance $\instance$ looks like but only the aforementioned properties it has. 

Starting from $\instance$, we can define an instance $\instance'$, with $N'=[n]$ being the set of agents, which still has an empty core. Let us denote by $N=\set{1,\dots, 7}$ the set of the seven agents in $\instance$. Their preferences for coalitions of size $1$ to $7$ remain the same while any other coalition size is strictly less preferred. In particular, we can extend their preferences in such a way that the instance is still singled-peaked with respect to the natural ordering $1\dots n$; namely, if $h\in [7]$ is the less preferred size according to agent $i\in N$, then, we set $h\succ_i 8 \succ_i 9 \dots \succ_i n$. For the remaining agents, we assume preferences to be decreasing for decreasing coalition size, i.e., $n\succ_i n-1 \succ_i \dots \succ_i 1$ for $i\in N'\setminus N$, which are clearly single-peaked in the natural ordering. 

We next show this instance has an empty core and that for any coalition structure $\pi$ there exists a core blocking coalition in $2^{N}\cup\set{\set{8, \dots, n}}\setminus\set{\emptyset}$. 

Given a partition $\pi$, any agent $i\in N$ if not in a coalition of size at most $7$ would deviate and form the singleton coalition. Assume now that in $\pi$ agents in $N$ are in coalitions of size at most $7$, as a consequence, agents in $N'\setminus N = \set{8, \dots, n}$ are in a coalition of size at most $n-7$; if such agents are in a coalition of size strictly smaller than $n-7$ then the coalition $\set{8, \dots, n}$ is a core blocking coalition. 
Let us finally assume that agents $8, \dots, n$ form a coalition together of size $n-7$ in $\pi$. No matter how the agents of $N$ are partitioned in $\pi$ we know that any partition of $N$ can always be core blocked by a coalition in $2^{N}$ since $\instance$ has an empty core. Therefore, the new instance $\instance'$ has an empty core and it is always possible to find a blocking coalition in $2^{N}\cup\set{\set{8, \dots, n}}\setminus\set{\emptyset}$.

We are now ready to show our claim. Let $\distribution$ be the uniform distribution over $2^{N}\cup\set{\set{8, \dots, n}} \setminus\emptyset$. For any coalition structure $\pi$ it holds that $\Pr_{C\sim \distribution}[C \text{ blocking for } \pi]> 1/2^7$ completing the proof.
\end{proof}

As for fractional HGs, the following corollary focuses specifically on $\lambda$-bounded distributions.
\anonySPnoEpsFractCorollary*
\begin{proof}
    Consider the instance $\instance'$ in the proof of Proposition~\ref{prop:anonySPnoEpsFract}. We have shown that there exists a collection of coalitions $\mathcal{F}=2^{N}\cup\set{\set{8, \dots, n}}\setminus\set{\emptyset}$ of size $2^7$ that always contains a core-blocking coalition for any possible partition $\pi$.
    
    Let $\distribution$ be a bounded distribution of parameter $\lambda$ such that, for some $p>0$, $\Prob{C}{C\sim \distribution}= p$ if $C\in \mathcal{F}$ and $p/\lambda$, otherwise. For such a distribution the probability of sampling a specific coalition in $\mathcal{F}$ is therefore given by $p = \frac{\lambda }{2^7(\lambda -1)+ 2^n}$. In conclusion,
    \begin{align*}
        \Prob{C \text{ core-blocks } \pi}{C\sim\distribution} \geq  \Prob{C \text{ core-blocks } \pi \,\wedge\,  C \in \mathcal{F}}{C\sim\distribution}\geq\frac{\lambda }{2^7(\lambda -1)+ 2^n} 
    \end{align*}
    and the thesis follows.
\end{proof}

\section{Simple fractional Hedonic Games -- missing proofs}
In the following, we will report the proofs of the technical lemmas omitted from Section~\ref{sec:fractional}.

The first result concerns the sample complexity of exactly learning agents' preferences in simple FHGs.
\fhgLearning*

\begin{proof}
Let us first focus on a fixed agent $i$. Every sampled set $S\in\mathcal{S}$ that contains $i$ corresponds to a linear equation
\begin{equation*}
\alpha_{S,1} v_i(\{1\}) + \dots + \alpha_{S,i-1} v_i(\{i-1\})+\alpha_{S,i+1} v_i(\{i+1\})+\dots + \alpha_{S,n} v_i(\{n\}) = v_i(S)/|S|,
\end{equation*}
where the $v_i(\{j\})$s are the unknowns and 
\begin{equation*}
\alpha_{S,j} = \begin{cases}
    1 & \text{ if $j\in S$}; \\
    0 & \text{ otherwise}.
  \end{cases}
\end{equation*}
Now, let us denote by $A^i=[\alpha_{S_q,j}]_{q \in[m'],j \in N \setminus \{i\}}$ the $m' \times (n-1)$ binary matrix corresponding to a sample $\mathcal{S}=\langle (S_1,v(S_1)),\dots, (S_{m'}, v(S_{m'}))\rangle$ in which every set $S_q, q\in[m']$, contains agent $i$. Learning the exact valuation function $v_i$ can be accomplished when the sets $S_1,\dots,S_{m'}$ are such that the matrix $A^i$ has full rank, i.e.,  rank $n-1$. 

Let us denote by $\alpha_1,\dots,\alpha_{n-1}$ the columns of $A^i$.
Having in mind that each entry in $A^i$ is equal to $0$ and $1$ with the same probability of $1/2$, we can inductively compute the probability that a set of columns $\{\alpha_1,\dots,\alpha_{k+1}\}$ is linearly independent as a function of $m'$. In particular, if we denote by $p_k$ the probability that the set $\{\alpha_1,\dots,\alpha_{k}\}$ is linearly independent, then it is easy to see that $p_{k+1}=p_k\cdot (1-2^{-(m'-k)})$ and $p_1=1-2^{-m'}$. Therefore, $p_k=\Pi_{i=1}^k \left(1-2^{-(m'-i+1)}\right)$, so that

\begin{equation*}
p_{n-1}= \Pi_{i=1}^{n-1} \left(1-\frac{1}{2^{m'-i+1}}\right)\ge \left(1-\frac{1}{2^{m'-n+2}}\right)^{n-1} \ge 1-\frac{n-1}{2^{m'-n+2}} \; .
\end{equation*}
It follows that $p_{n-1}\ge 1-\delta/(2n)$ when
\begin{equation*}
m'= 3 \log \frac{n}{\delta}+n-1 \ge 2 \log_2 \frac{n}{\delta}+n-1 \ge \log_2 \frac{2n(n-1)}{\delta}+n-2 \; .
\end{equation*}

A sampled set $S$ contains agent $i$ with probability $1/2$. Let $E_i$ be the number of sets containing $i$ in a sample of size $m$. Then, by the Chernoff's inequality, taking $\beta=1-2m'/m$,

\begin{equation*}
\Prob{E_i < m'}{} =\Prob{E_i < (1-\beta) m /2}{}  \le e^{-\frac{\left(1 - \frac{2m'}{m}\right)^2 \frac m 2}{2}} = e^{-\frac{(m-2m')^2}{4m}} \; ,
\end{equation*}
which is less then $\delta/(2n)$ if and only if $\frac{(m-2m')^2}{4m} \ge \log \frac{2n}{\delta}$. Since $\frac{(m-2m')^2}{4m} = \frac{m}{4}-m'+\frac{m'^2}{m}$, it suffices to require that $\frac{m}{4}-m'\ge \log \frac{2n}{\delta}$, i.e., $m \ge 4\log \frac{2n}{\delta} + 4m'$. This holds letting $m = 16 \log \frac{n}{\delta} + 4n \ge 4\log \frac{2n}{\delta} + 12 \log \frac{n}{\delta}+4n -4 = 4\log \frac{2n}{\delta} + 4m'$.

Summarizing, the probability of learning exactly the valuation function $v_i$ is at least the probability that with $m$ examples $E_i \ge m'$ times the probability that $A^i$ has rank $n-1$ given $m'$ examples containing $i$, that is at least $(1-\delta/(2n))\cdot (1-\delta/(2n)) \ge 1-\delta/n$.

By a direct union bound argument, the probability of not learning exactly one of the $n$ valuation functions $v_1,\ldots,v_n$ is at most $\delta$, hence the claim.
\end{proof}
Hereafter, we will prove the main technical lemma of the section:
\fhgDifficultCase*

The proof of Lemma~\ref{lem:frCase1} is quite involved and we will divide it in several parts. Throughout the proof, $N_i$ will denote the neighborhood of agent $i$ in the graph induced by the game. We start by proving the first statement i.e. \ that there is always at least one agent in $H$ in each iteration of the while loop (line 10).
\begin{lemma}\label{lem:frHnonempty}
    Let $\phi \ge \fhg$. Then, $H$ is never empty when executing line 11 of the while loop of Algorithm~\ref{alg:fhg}. 
\end{lemma}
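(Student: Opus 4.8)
The plan is to track the size of $H$ through the \textbf{while} loop and show it never drops to $0$ before an iteration starts. Recall that in the case $\phi \ge \fhg$ the set $H$ is initialised to the $\fhg = \tfrac{n^{1/3}}{62}$ agents of smallest out-degree, and that $\phi \ge \fhg$ forces every agent initially in $H$ (hence every agent ever in $H$) to have out-degree $d_i \le \fdeg = n - 31 n^{2/3}$. Consequently $n - d_i \ge 31 n^{2/3}$ for each such agent, so that
\[
|F_i| \;=\; \Big\lceil \tfrac{2 d_i}{n - d_i}\Big\rceil \;\le\; \tfrac{2 n^{1/3}}{31} + 1 .
\]
Since the loop is executed $\tfrac{n^{1/3}}{124} = \tfrac12 |H|$ times and each iteration deletes from $H$ exactly the agent $i$ together with the set $F_i \cap H$, it suffices to prove that $|F_i \cap H| \le 1$ at every iteration: then after $t \le \tfrac{n^{1/3}}{124} - 1$ completed iterations one has $|H| \ge \tfrac{n^{1/3}}{62} - 2t \ge 2 > 0$, which is exactly what is needed whenever line~11 is reached.

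First I would bound the number of non-singleton agents present at any point of the loop. At iteration $s$ the newly formed coalition $C_{i_s} = \{i_s\} \cup \bigcup_{j \in F_{i_s}} \pi(j)$ turns into non-singletons only $i_s$ together with those $j \in F_{i_s}$ that were still singletons; every non-singleton $j \in F_{i_s}$ merely contributes a coalition $\pi(j)$ that had already been created at an earlier iteration. Hence at most $1 + |F_{i_s}|$ fresh non-singletons appear per iteration, and by induction the total number of non-singleton agents never exceeds
\[
B \;:=\; \sum_{s} \big(1 + |F_{i_s}|\big) \;\le\; \tfrac{n^{1/3}}{124}\Big(\tfrac{2 n^{1/3}}{31} + 2\Big) \;=\; O\big(n^{2/3}\big).
\]

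Then I would split on the value of $|F_i|$. If $|F_i| \le 1$ there is nothing to do, since $|F_i \cap H| \le |F_i| \le 1$ trivially. If $|F_i| \ge 2$ then $\tfrac{2 d_i}{n - d_i} > 1$, i.e.\ $d_i > \tfrac n3$; the number of neighbours of $i$ that are singletons lying in $N \setminus H$ is then at least
\[
d_i - |H| - B \;>\; \tfrac n3 - \tfrac{n^{1/3}}{62} - B ,
\]
which, because $B = O(n^{2/3})$, exceeds $\tfrac{2 n^{1/3}}{31} + 1 \ge |F_i|$ once $n$ is large enough (a harmless assumption, since the target bound $2^{-\Omega(n^{1/3})}$ is vacuous for small $n$). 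As $F_i$ is filled by giving priority to singleton agents of $N \setminus H$, it is in this case drawn entirely from $N \setminus H$, so $F_i \cap H = \emptyset$. In both cases $|F_i \cap H| \le 1$, which completes the argument.

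The main obstacle is the middle step: to guarantee that $F_i$ can avoid $H$ when it is large, one really needs the (slightly fiddly) accounting showing that only $O(n^{2/3})$ agents have been absorbed into non-trivial coalitions so far, so that plenty of singleton neighbours outside $H$ remain available. Everything else amounts to comparing the three quantities $|H| = \tfrac{n^{1/3}}{62}$, the per-iteration loss $\le 2$, and the iteration count $\tfrac{n^{1/3}}{124}$, which were evidently chosen with precisely this slack in mind.
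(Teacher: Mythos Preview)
Your argument is correct and follows essentially the same strategy as the paper: bound the total number of non-singleton agents created so far, use this together with the priority rule in line~13 to show that whenever $|F_i|$ is large (equivalently $d_i>n/3$) the set $F_i$ can be filled entirely with singleton neighbours in $N\setminus H$, and fall back on the trivial bound $|F_i\cap H|\le |F_i|\le 1$ otherwise.

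The only noteworthy difference is in the case split. You split directly on $|F_i|\le 1$ versus $|F_i|\ge 2$, whereas the paper splits on $d_i<\tfrac{n^{1/3}}{15}$ versus $d_i\ge\tfrac{n^{1/3}}{15}$ and then further into $d_i\le n/3$ and $d_i>n/3$. For the present lemma your coarser split is cleaner and suffices; the paper's finer split is really there to prepare Lemma~\ref{lem:frCase1.1}, where one needs to know that an agent with $\tfrac{n^{1/3}}{15}\le d_i\le n/3$ is matched with exactly one \emph{singleton} neighbour in $N\setminus H$ (so that $v_i(\pi)=1/2$ exactly). Your proof does not establish this by-product, but it is not required for Lemma~\ref{lem:frHnonempty} itself. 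The paper also tracks constants a bit more explicitly instead of invoking ``$n$ large enough'', but since the ambient theorem is asymptotic this is harmless.
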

\begin{proof}
Consider an iteration $t$ of the while loop, and assume that in line 11 a given $i \in H$ is selected. Then, if in line 13 it results $F_i \subseteq N \setminus H$, in line 16  only agent $i$ is removed from $H$. 

We now show that such a property is guaranteed whenever $d_i \ge \frac{n^{1/3}}{15}$. 

If $\frac{n^{1/3}}{15} \le d_i \le \frac{n}{3}$, then 
$\ceil[\big]{\frac{2d_i}{n - d_i}} = 1$. 
We prove that agent $i$, in line 13, selects in $F_i$ exactly one singleton neighbor in $N\setminus H$. In fact, the number of agents in $N\setminus H$ that have been already grouped in some coalitions during the previous iterations is less than $\frac{n^{1/3}}{124}$, as by the ordering of $H$ the property $|F_j|=1$ holds for all agents $j$ selected in line 11 during the previous iterations. As a consequence, since $H=|\fhg|$, the number of singleton neighbors of $i$ left in $N\setminus H$ is at least $d_i - \left(\fhg +\frac{n^{1/3}}{124}\right) \ge \frac{n^{1/3}}{15}-\left(\fhg +\frac{n^{1/3}}{124}\right) \ge 1$.

If $d_i > \frac{n}{3}$, again $F_i$ will contain exactly $\ceil[\big]{\frac{2d_i}{n - d_i}}$ singleton neighbors in $N \setminus H$. In fact, denoted as $\Gr^t$ the set of agents selected in line 11 in the previous iterations, the number of available singleton neighbors of $i$ outside $H$ at iteration $t$ is more than
\begin{align*}
    d_i - \sum_{j \in Gr^t} \ceil[\big]{\frac{2d_j}{n - d_j}} - \fhg &\ge d_i - \frac{n^{1/3}}{124}\ceil[\big]{\frac{2d_i}{n - d_i}} - \fhg\\
    &\ge d_i - \frac{n^{2/3}}{62\cdot 31} + \frac{n^{1/3}}{124} - \frac{n^{1/3}}{62} > d_i - \frac{n^{2/3}}{62\cdot 31} -\frac{n^{1/3}}{62}\\
    &\ge \frac{n}{3}- \frac{n^{2/3}}{62\cdot 31} -\frac{n^{1/3}}{62} \ge \frac{3n^{2/3}}{10} \ge \frac{2n^{1/3}}{31} - 1\\
    &\ge \ceil[\big]{\frac{2d_i}{n - d_i}} \; ,
    \end{align*}
    where we have used the fact that, being $i \in H$,  $d_i \le \fdeg$, so that
    \begin{equation*}
        \ceil[\big]{\frac{2d_i}{n - d_i}} \le \frac{2(\fdeg)}{31n^{2/3}} + 1 = \frac{2n^{1/3}}{31} - 1 \; .
    \end{equation*}

In order to prove the claim, we finally observe that if $d_i < \frac{n^{1/3}}{15}$, again $|F_i|=1$ and in line 13 at most two agents are removed from $H$. Therefore, if $s$ is the total number of agents of degree less than $\frac{n^{1/3}}{15}$ selected in line 11 during the execution of Algorithm~\ref{alg:fhg}, at the beginning of iteration $t$ the number of agents left in $H$ is at least $\fhg-2s-(t-1-s) = \fhg-s-t+1 \ge 1$, as $s \le \frac{n^{1/3}}{124}$ and $t \le \frac{n^{1/3}}{124}$.
\end{proof}

To prove the second statement of Lemma~\ref{lem:frCase1}, we will differentiate two cases, based on the degree of the agent $i \in H$ picked at a certain iteration. First we will consider agents with degree $d_i \ge \frac{n^{1/3}}{15}$.

\begin{lemma}\label{lem:frCase1.1}
    Let $\phi \ge \fhg$ and $\pi$ be the partition returned by Algorithm~\ref{alg:fhg}. Then, each agent $i \in Gr$ with degree $d_i \ge \frac{n^{1/3}}{15}$ is green.
\end{lemma}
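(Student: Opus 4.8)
The plan is to show that a high-degree $i\in\Gr$ ends up in exactly the coalition $C_i=\{i\}\cup F_i$ that is created when $i$ is selected in line~11 of \Cref{alg:fhg}, which gives $i$ the value $v_i(\pi)=\frac{k}{k+1}$ with $k:=\lceil 2d_i/(n-d_i)\rceil$, and then to bound by a Chernoff argument the probability that a uniformly random coalition containing $i$ gives $i$ a strictly larger value.

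\textbf{Step 1: pinning down $v_i(\pi)$.} First I would observe that, since $\phi\ge\fhg$, every agent in $H\supseteq\Gr$ satisfies $d_i\le\fdeg$, hence $n-d_i\ge 31n^{2/3}$. Agents are processed in non-decreasing degree order, so once an agent of degree $\ge n^{1/3}/15$ is selected in line~11, all subsequently selected agents also have degree $\ge n^{1/3}/15$; by \Cref{lem:frHnonempty} (and the case analysis in its proof) each such agent recruits into its set $F$ only singleton neighbours lying in $N\setminus H$. Hence when $i$ is picked every $j\in F_i$ is still a singleton, so $C_i=\{i\}\cup F_i$ with $F_i\subseteq\Ni$ and $|F_i|=k$; moreover no later iteration can merge $C_i$ into another coalition, since that would require a later---hence also high-degree---selected agent to put a member of the non-singleton coalition $C_i$ into its $F$-set, contradicting the previous sentence. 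The closing line of the algorithm only affects agents still in $H$, not $C_i$. I would therefore conclude $\pi(i)=C_i$ and $v_i(\pi)=\frac{k}{k+1}$.

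\textbf{Steps 2--3: the tail estimate.} Next I would fix $C\sim U(2^N)$, condition on $i\in C$ (so the remaining $n-1$ agents lie in $C$ independently with probability $1/2$), and set $A=|C\cap\Ni|\sim\mathrm{Bin}(d_i,1/2)$ and $D=|C\setminus(\Ni\cup\{i\})|\sim\mathrm{Bin}(n-1-d_i,1/2)$, which are independent and satisfy $|C|=A+D+1$. Then $v_i(C)>v_i(\pi)$ is equivalent to $\frac{A}{A+D+1}>\frac{k}{k+1}$, i.e.\ to $A>k(D+1)$. Since $E[A]=d_i/2$ while $k\ge 2d_i/(n-d_i)$ gives $E[k(D+1)]=k\cdot\frac{n+1-d_i}{2}\ge d_i=2E[A]$, the event $A>k(D+1)$ is contained in $\{A>\frac34 d_i\}\cup\{k(D+1)<\frac34 d_i\}$; and in the second case $\frac{2d_i}{n-d_i}(D+1)<\frac34 d_i$ forces $D<\frac{3(n-d_i)}{8}-1\le(1-\frac14)E[D]$, using $n-d_i\ge 31n^{2/3}$. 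Applying~\eqref{eq:chernoff} to the two sub-events gives $\Pr[A>\frac34 d_i]\le e^{-d_i/24}$ and $\Pr[D<(1-\frac14)E[D]]\le e^{-E[D]/32}$, with $E[D]=\frac{n-1-d_i}{2}\ge 15n^{2/3}$ for large $n$. A union bound, together with $d_i\ge n^{1/3}/15$, then yields $\Pr[v_i(C)>v_i(\pi)\mid i\in C]\le e^{-d_i/24}+e^{-E[D]/32}\le 2^{-\Omega(n^{1/3})}$, so $i$ is green.

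\textbf{Main obstacle.} I expect the delicate part to be Step~1: the whole estimate collapses if $\pi(i)$ acquires even a single non-neighbour of $i$, since this pulls $v_i(\pi)$ strictly below $\frac{k}{k+1}$, so one must argue carefully that the coalition of a high-degree agent is frozen once created. This is exactly what the non-decreasing degree ordering, combined with the ``priority to singletons in $N\setminus H$'' rule already analysed in \Cref{lem:frHnonempty}, guarantees. Once $v_i(\pi)=\frac{k}{k+1}$ is secured, Steps~2--3 are routine; the only quantitative input there is the observation $E[k(D+1)]\ge 2E[A]$, which provides the multiplicative gap needed to separate the two tails.
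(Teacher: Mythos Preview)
Your proof is correct, and Step~1 matches the paper's treatment (which likewise relies on the proof of \Cref{lem:frHnonempty} to conclude that $\pi(i)=\{i\}\cup F_i$ with $|F_i|=k$ neighbours of $i$).

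Where you diverge is in the tail estimate. The paper splits into the two sub-cases $d_i\le n/3$ and $d_i>n/3$: in the first case $k=1$ so $v_i(\pi)=1/2$, and a crude bound $v_i(C)\le \frac{3d_i}{n}$ suffices; in the second case the paper lower-bounds $v_i(\pi)\ge \frac{2d_i}{n+d_i}$ and then picks carefully tuned Chernoff parameters $\alpha=\frac{1}{n^{1/3}/15-1}$, $\beta=\frac{1}{2n^{1/3}}$ so that $\frac{(1+\alpha)d_i}{(1-\beta)n}\le \frac{2d_i}{n+d_i}$, which requires the specific constant $31$ in the threshold $\fdeg$ to make the algebra close. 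Your route is cleaner: by rewriting $v_i(C)>v_i(\pi)$ as $A>k(D+1)$ and observing directly that $E[k(D+1)]\ge 2E[A]$, you get a uniform multiplicative gap that handles both regimes at once with a single pair of Chernoff bounds. This avoids the case distinction and the delicate choice of $\alpha,\beta$, at the cost of a slightly less explicit constant in the $2^{-\Omega(n^{1/3})}$ exponent. Either approach yields the claim; yours is arguably more transparent.
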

\begin{proof}
As shown in the proof of the previous lemma, if $i$ has degree $d_i \ge \frac{n^{1/3}}{15}$, then coalition $\pi(i)$ will contain agent $i$ and exactly $\ceil[\big]{\frac{2d_i}{n - d_i}}$ neighbors of $i$ in $N \setminus H$. This allows us to determine exactly the utility of $i$ in $\pi$. 

In particular, if $d_i \le \frac{n}{3}$, being $\ceil[\big]{\frac{2d_i}{n - d_i}}=1$,  $v_i(\pi)=1/2$, as agent $i$ in her coalition forms a matching with one neighbor in $N\setminus H$. Consider then a coalition $C$ sampled according to the uniform distribution and containing $i$. We now show that the probability that $v_i(C)> 1/2$ is low enough to respect the definition of greeness. In fact, the following Chernoff bounds hold:
\begin{align*}
    &\Prob{\modulus{C \cap N_i} > \frac{3d_i}{4}}{C \sim U(2^N)} \le e^{-\frac{d_i}{24}} \le 2^{-n^{1/3}/15 \cdot 24} \; ;\\
    &\Prob{\modulus{C} < \frac{n}{4}}{C \sim U(2^N)} \le e^{-\frac{n}{16}} < 2^{-n^{1/3}/16} \; .
\end{align*}
By applying the union bound, the probability that one of the above two events holds is at most $2^{-n^{1/3}/15 \cdot 24} + 2^{-n^{1/3}/16 } < 2^{-n^{1/3}}$. Therefore, with probability at least $1-2^{-n^{1/3}}$, none of the two events hold, and recalling that by hypothesis $d_i \le n/3$, $v_i(C) \le \frac{3d_i}{4n} < \frac{1}{2}$, hence $i$ is green.

If $d_i > \frac{n}{3}$, being $i$ in a coalition with exactly $\ceil[\big]{\frac{2d_i}{n - d_i}}$ 
neighbors in $N \setminus H$, it is possible to lower bound $v_i(\pi)$ as
\begin{equation*}
    v_i(\pi) = \frac{\ceil[\big]{\frac{2d_i}{n - d_i}}}{\ceil[\big]{\frac{2d_i}{n - d_i}}+1} \ge \frac{2d_i}{n+d_i} \; .
\end{equation*}
We now show that the probability that for a sampled coalition $C$ containing $i$ it results $v_i(C) \ge \frac{2d_i}{n+d_i}$ is again low enough to satisfy the definition of greeness. We will make use of the following Chernoff's bounds. Given $\alpha$ and $\beta$, positive constants, it holds: 
\begin{align*}
    &\Prob{\modulus{C \cap N_i} > \left(1+\alpha \right) \frac{d_i}{2}}{C \sim U(2^N)} \le e^{-\frac{\alpha^2d_i}{6}} \; ;\\
    &\Prob{\modulus{C} < \left(1-\beta \right) \frac{n}{2}}{C \sim U(2^N)} \le e^{-\frac{\beta^2n}{4}} \; .
\end{align*}

Similarly as above, with probability a least $1-(e^{-\frac{\alpha^2d_i}{6}}+e^{-\frac{\beta^2n}{4}})$, it holds that $v_i(C) \le \frac{(1+\alpha)d_i}{(1-\beta) n}$. We want to show that, for suitable $\alpha$ and $\beta$, this quantity is lower than $\frac{2d_i}{n+d_i}$ giving us the desired result. To this aim, choosing $\beta = \frac{1}{2n^{1/3}}$ and $\alpha = \frac{1}{\frac{n^{1/3}}{15} -1}$,

\begin{align*}
    v_i(C) &\le \frac{(1+\alpha)d_i}{(1-\beta) n} = \frac{\left(1+\frac{1}{\frac{n^{1/3}}{15} -1}\right)d_i}{\left(1-\frac{1}{2n^{1/3}}\right) n}\\
    &= \frac{2d_i\left(\frac{n^{1/3}}{15}\right)}{\left(\frac{n^{1/3}}{15} -1\right)(2n-n^{2/3})}\\
    &= \frac{2d_i}{15\left(\frac{n^{1/3}}{15} -1\right)(2n^{2/3}-n^{1/3})}\\
    &= \frac{2d_i}{2n -31n^{2/3}+15n^{1/3}}\le \frac{2d_i}{n+d_i} \; .\\
\end{align*}
It is straightforward to see that:
\begin{equation*}
    e^{\frac{-\alpha^2d_i}{6}}+e^{\frac{-\beta^2n}{4}} \le e^{-\frac{25n^{1/3}}{2}}+e^{-\frac{n^{1/3}}{16}} \le 2 \cdot 2^{-{\frac{n^{1/3}}{16}}} = 2^{-{\frac{n^{1/3}}{16}}+1} \; .
\end{equation*}
\end{proof}
The following lemma deals with the agents with lowest degrees, that can be regrouped by Algorithm~\ref{alg:fhg}.
\begin{lemma}\label{lem:frCase1.2}
    Let $\phi \ge \fhg$ and $\pi$ be the partition returned by Algorithm~\ref{alg:fhg}. Then, each agent $i \in Gr$ with degree $d_i < \frac{n^{1/3}}{15}$ is green.
\end{lemma}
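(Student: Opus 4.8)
The plan is to verify the greenness inequality \eqref{eq:green} for every $i\in\Gr$ with $d_i<\frac{n^{1/3}}{15}$, exploiting that such an agent values \emph{every} coalition by at most $\frac{d_i}{|C|}$, since $|N_i|=d_i$. First I would clear the degenerate case $d_i=0$: then $N_i=\emptyset$, so $v_i(C)=0=v_i(\pi)$ for all $C\ni i$ and the conditional probability in \eqref{eq:green} is exactly $0$. So assume $1\le d_i<\frac{n^{1/3}}{15}$. Since $1\le d_i<n/3$, we have $\ceil[\big]{\frac{2d_i}{n-d_i}}=1$, so when $i$ is selected in the while loop the set $F_i$ is a single out-neighbour $j$ of $i$; as the algorithm only ever merges coalitions and never splits them, the returned coalition $\pi(i)$ contains both $i$ and $j$, so $|\pi(i)\cap N_i|\ge 1$ and hence $v_i(\pi)=\frac{|\pi(i)\cap N_i|}{|\pi(i)|}\ge\frac{1}{|\pi(i)|}$.

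The quantitative core is the bound $|\pi(i)|=O(n^{2/3})$. I would prove it not by inspecting $\pi(i)$ in isolation (coalitions created in the loop can later be absorbed into others through a shared neighbour, so they chain), but by bounding the \emph{total} number of agents the while loop ever removes from a singleton. The loop runs at most $\frac{n^{1/3}}{124}$ times, and at the iteration selecting $i'$ the new coalition $C_{i'}$ absorbs $i'$ together with the coalitions of the $|F_{i'}|$ agents of $F_{i'}$, so at most $1+|F_{i'}|$ agents are freshly taken out of a singleton there. Since $i'\in H$ and $\phi\ge\fhg$ forces $d_{i'}\le\fdeg$, the estimate used in the proof of Lemma~\ref{lem:frHnonempty} gives $|F_{i'}|=\ceil[\big]{\frac{2d_{i'}}{n-d_{i'}}}\le\frac{2n^{1/3}}{31}-1$. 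Summing over iterations, at most $\frac{n^{1/3}}{124}\cdot\frac{2n^{1/3}}{31}=\frac{n^{2/3}}{1922}$ agents ever leave their singletons, and since for $d_i\ge 1$ the coalition $\pi(i)$ is a union of such loop-coalitions (it has size $\ge 2$, so the final regrouping of leftover singletons does not touch it), $|\pi(i)|\le\frac{n^{2/3}}{1922}$. Plugging in and using $d_i<\frac{n^{1/3}}{15}$, this yields $v_i(\pi)\ge\frac{1}{|\pi(i)|}\ge\frac{1922}{n^{2/3}}>\frac{4d_i}{n}$.

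It then remains only to apply a Chernoff bound to the coalition size. Conditioned on $i\in C$ with $C\sim U(2^N)$, $|C|$ is $1$ plus a sum of $n-1$ independent fair Bernoulli trials, hence has mean at least $n/2$, so by \eqref{eq:chernoff}, $\Prob{|C|\le n/4 \mid i\in C}{C\sim U(2^N)}\le e^{-\Omega(n)}$. On the complementary event $|C|>n/4$ we get $v_i(C)=\frac{|C\cap N_i|}{|C|}\le\frac{d_i}{|C|}<\frac{4d_i}{n}<v_i(\pi)$, so $C$ cannot be a coalition with $v_i(C)>v_i(\pi)$. Therefore $\Prob{v_i(C)>v_i(\pi) \mid i\in C}{C\sim U(2^N)}\le e^{-\Omega(n)}\le 2^{-\Omega(n^{1/3})}$, which is exactly greenness. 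The main obstacle is the cumulative bound $|\pi(i)|=O(n^{2/3})$: one has to argue that, over the whole while loop, only $O(n^{2/3})$ agents leave their singletons, correctly accounting for the chaining of merged coalitions, the ``priority to singletons'' rule, and the loosely stated final regrouping step — everything after that is a one-line Chernoff estimate, since $v_i(\pi)\ge 4d_i/n$ already makes it impossible for any coalition of size $>n/4$ to strictly improve $i$'s utility.
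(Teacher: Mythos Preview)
Your argument is correct, and it reaches the goal by a somewhat different route than the paper. The paper exploits the fact that agents are processed in \emph{non-decreasing} degree order: every iteration that selects a low-degree agent therefore precedes all high-degree iterations, and since each low-degree iteration has $|F_{i'}|=1$, the largest coalition that can form among those iterations has size at most $\frac{n^{1/3}}{124}+1$. This gives the sharper lower bound $v_i(\pi)\ge\frac{1}{1+n^{1/3}/124}$, after which the paper needs the more delicate Chernoff threshold $|C|>\frac{n-n^{2/3}}{2}$ to force $v_i(C)\le\frac{d_i}{|C|}<\frac{1}{1+n^{1/3}/124}$.

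You instead bound $|\pi(i)|$ by the \emph{total} number of agents ever removed from a singleton across the whole while loop, obtaining only $|\pi(i)|\le\frac{n^{2/3}}{1922}$. This is much looser, but it is robust (it does not rely on the processing order or on separating low- and high-degree iterations), and it is still strong enough: it yields $v_i(\pi)>\frac{4d_i}{n}$, so the very crude threshold $|C|>n/4$ already suffices, with the corresponding Chernoff term being $e^{-\Omega(n)}$ rather than $e^{-\Omega(n^{1/3})}$. In short, the paper trades a structural observation for a tighter size bound and then has to thread a finer needle in the tail estimate; your counting argument is blunter on $|\pi(i)|$ but makes the concentration step trivial. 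Both are valid proofs of the lemma.
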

\begin{proof}
Differently from the case addressed in the previous lemma, agents $i \in H$ with degree $d_i < \frac{n^{1/3}}{15}$ are possibly regrouped by the procedure. This in particular happens when they are not able to select a singleton neighbor in $N\setminus H$, and thus they must select a neighbor already included in some non-singleton coalition formed in the previous iterations. Since the algorithm stops after at most $\frac{n^{1/3}}{124}$ steps, the worst possible scenario is when all these agents fall together in a single coalition of size $\frac{n^{1/3}}{124}+1$, which implies that $v_i(\pi) \ge \frac{1}{1 + n^{1/3}/124}$.

Consider an agent $i \in \Gr$ with $d_i < \frac{n^{1/3}}{15}$. 
By a Chernoff's bound it holds that:
\begin{equation*}
    \Prob{\modulus{C}\le \frac{n}{2}\left(1-\frac{1}{n^{1/3}}\right)}{C\sim U(2^N)} \le e^{-\frac{n^{1/3}}{4}} = 2^{-\Omega(n^{1/3})} \; .
\end{equation*}
Observe that, for $\modulus{C}>\frac{n}{2}\left(1-\frac{1}{n^{1/3}}\right) = \frac{n-n^{2/3}}{2}$:
\begin{equation*}
    v_i(C) \le \frac{d_i}{\modulus{C}} \le \frac{2n^{1/3}}{15\left(n-n^{2/3}\right)} = \frac{2}{15 n^{1/3}\left(n^{1/3}-1\right)}< \frac{1}{n^{1/3}}<\frac{1}{1+n^{1/3}/124} \; .
\end{equation*}
Thus, with probability greater than $1-2^{-\Omega(n^{1/3})}$, $v_i(C) < v_i(\pi)$, which concludes the proof.
\end{proof}

\section{Anonymous Hedonic Games -- missing proofs}
In the following, we will report the missing proofs of Section~\ref{sec:anonymous}.

First, we will prove the Chernoff bounds over the size of sampled coalitions from a bounded distribution:
\ChernoffSize*

\begin{proof}
Observe that $X$ can be seen as the sum of $n$ Poisson trials $X_i, i\in N$ where $X_i = 1$ if $i$ belongs to the sampled coalition and $0$ otherwise. Let us denote by $p_i$ the probability of the event $X_i = 1$. From Lemma~\ref{lem:bartlett}, we know that $\frac{1}{\lambda + 1} \le p_i \le \frac{\lambda}{\lambda +1}$, as the size of the family of subsets that we focus on is $2^{n-1}$. Since the mean $\mu$ of $X$ is equal to $\sum_{i\in N} p_i$, by summing the above expression over $i\in N$ we obtain exactly:
\begin{equation}\label{eq:bounded_mean}
\frac{n}{\lambda + 1} \le \mu \le \frac{\lambda n}{\lambda +1} \; .
\end{equation}

To show the bound instead, we will use Equation~\ref{eq:chernoff}, i.e., for $b\in(0,1)$ being constant:
\begin{equation*}
    \Prob{\modulus{X-\mu} \ge b \mu}{\distribution} \le 2e^{-b^2\mu/3} \le  2e^{-b^2 n/3(\lambda + 1)} \; ,
\end{equation*}
where the last inequality follows by Equation~\ref{eq:bounded_mean}. Setting $\alpha =\sqrt{\frac{3(\lambda + 1) \log{\frac{4}{\eps}} }{n}}$ yields $e^{-b^2 n/3(\lambda + 1)} = \frac{\eps}{4}$ and concludes the proof. 
\end{proof}

The following two results concern the estimation of the interval $\Ide$. First, we will prove that, with a certain number of samples, it is possible to learn a confidence interval for the mean coalition size $\mu$.
\confidencemu*
\begin{proof}
    By the Hoeffding bound (see Theorem 4.13 in~\cite{bookProbComp}):
    \begin{equation*}
      \Prob{\modulus{\bar{\mu} -\mu}\ge \alpha}{\sample \sim \distribution^m} \leq 2e^{-\frac{2\epsilon^2m}{n^2}} <\delta \: ,
    \end{equation*}
where the last inequality holds by hypothesis on $m$.
\end{proof}

We are now able to prove that $\Ide$ can indeed be learned from samples.
\learningExactAnonymous*
\begin{proof}
    Let us start observing that, for any $s \in \Ide$,
    \begin{equation*}
        \frac{\eps}{2} \le \left( 1-\frac{\eps}{2}\right) \le \sum_{t \in \Ide}\Prob{\modulus{C} = t}{C \sim \distribution} \le \lambda \cdot \modulus{\Ide}  \cdot \Prob{\modulus{C} = s}{C \sim \distribution}\, ,
    \end{equation*}
    where the second inequality follows by Lemma~\ref{lem:chernoff} while the third by the definition of $\lambda$-bounded.
    Therefore, $\Prob{\modulus{C} = s}{C \sim \distribution} \ge \frac{\eps}{2\lambda n}$. 
    In order to learn exactly $v_i(s)$ for a certain $s$, it is sufficient to sample a coalition of size $s$ containing agent $i$. 
    By the definition of conditional probability  we have $ \Prob{\modulus{C} = s \wedge i \in C}{C \sim \distribution} = \Prob{i \in C \: | \;\modulus{C} = s}{C \sim \distribution} \cdot\Prob{\modulus{C} = s}{C \sim \distribution} \; $.
    It is not hard to see that the conditional probability remains a $\lambda$-bounded distribution if we restrict the probability space on coalitions of size $s$. We can therefore apply Lemma~\ref{lem:bartlett} and conclude that 
    \begin{align*}
       \Prob{\modulus{C} = s \wedge i \in C}{C \sim \distribution} 
       \ge \frac{s}{s+\lambda(n-s)}\cdot \frac{\eps}{2\lambda n} \ge \frac{\eps}{2\lambda (1+\lambda) n^2} \; .
    \end{align*}
As a consequence, the probability that, sampled $m$ coalitions, none of them is of size $s$ and contains $i$ is upper bounded by:
\begin{equation*}
    \left( 1-\frac{\eps}{2\lambda(1+\lambda)n^2} \right)^m \le e^{-\frac{m\eps}{2\lambda(1+\lambda)n^2}} \; .
\end{equation*}
By setting $m=\frac{ 2\lambda(1+\lambda)n^2\log{n^2/\delta}}{\eps}$, we finally obtain that the probability that we have not learned the valuation of $i$ for the size $s$ after $m$ samples is less than $\delta/n^2$. 

In conclusion, applying the union bound twice, the probability that for at least one agent we have not learned the valuation for some size $s\in \Ide$ is at most $\delta$. This concludes our proof.
\end{proof}

We will turn our attention now to single-peaked anonymous HGs. In this case, we are able to show a different procedure, that refines the result already obtained in the general case.

\SpAnonymous*

Also in this case, we will have at first a learning phase which is the very same learning phase as in the previous section.
Let us define $I$ as in the proof of Theorem \GV{x}; therefore we know that with confidence $1-\delta$, $\Ide \subseteq I$.
A crucial observation is that if we restrict single-peaked preferences on $I$ preferences remain single-peaked. Let $\set{s_1, \dots,s_k}$ be the sizes in $I$ according to the single-peaked ordering. 

We define $p_i$ the peak of $i$ in $I$, and for each $h\in [k]$ :
\begin{align*}
    L_{h} &= \set{i\in N \,\vert\, p_i=s_\ell \wedge \ell< h} \\
    E_{h} &= \set{i\in N \,\vert\, p_i = s_h}\\
    G_{h} &= \set{i\in N \,\vert\, p_i=s_\ell \wedge \ell > h} \, .
\end{align*}

Let $h^*\in [k]$ be the highest index such that $\modulus{L_{h^*}} \leq n/2$. For the sake of simplicity, from now on we will omit $h^*$ and use the notation $L, E, G$ in place of $L_{h^*}, E_{h^*}, G_{h^*}$. 
Notice that, by definition, $\modulus{L} + \modulus{E} \geq n/2$, and hence $\modulus{G} \leq n/2$. 

The computation phase works as follows. 

Let $s^*=s_{h^*}$ and let $r= n \mod s^*$. We create $(n-r)/s^*$ coalitions of size $s^*$ and a coalition of size $r$ (if $r>0$). In forming coalitions of size $s^*$, we give priority to agents in $E$, that is, in the coalition of size $r$ there is an agent of $E$ if and only if any other coalition contains only agents of $E$.

Let $\pi$ be the resulting coalition structure and $N'= \set{i\in N \,s.t.\, \modulus{\pi(i)} =s^* }$. We denote by $X'= X\cap N'$ and by $x'=\modulus{X'}$ for $X\in\set{L,G,E}$.

\begin{lemma}
For any $C$ core-blocking of $\pi$, $E'\cap C= \emptyset$.
\end{lemma}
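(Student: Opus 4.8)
The plan is to prove the contrapositive agentwise: I fix an agent $i \in E'$ and show that $i$ cannot lie in \emph{any} core-blocking coalition, from which $E' \cap C = \emptyset$ follows for every core-blocking $C$. Since $i \in N'$, agent $i$ sits in a coalition of size $s^*$ under $\pi$, so $v_i(\pi) = v_i(s^*)$; and since $i \in E$, the size $s^*$ is the peak of $i$ \emph{within} $I$, i.e.\ $v_i(s^*) = \max_{s \in I} v_i(s)$. If $C$ core-blocks $\pi$ and $i \in C$, then by definition $v_i(\modulus{C}) > v_i(\pi) = v_i(s^*)$; in particular $\modulus{C} \notin I$, for otherwise $v_i(\modulus{C}) \le \max_{s\in I} v_i(s) = v_i(s^*)$, a contradiction. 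Hence the entire difficulty is concentrated in blocking coalitions whose size falls \emph{outside} $I$ — exactly the case that a purely $I$-local argument (as in the general anonymous setting) cannot reach.

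The heart of the argument is to upgrade ``$s^*$ is the peak of $i$ inside $I$'' to ``$s^*$ is the \emph{global} peak of $i$ over all of $[n]$''. I would do this using the unimodality of single-peaked preferences: along the single-peaked permutation, $v_i$ is non-decreasing up to the peak and non-increasing afterwards, so a local maximum is automatically a global maximum. Writing $s_1, \ldots, s_k$ for the elements of $I$ in single-peaked order, the position of $s^* = s_{h^*}$ lies strictly between those of its neighbours $s_{h^*-1}$ and $s_{h^*+1}$, which both belong to $I$; since $s^*$ maximizes $v_i$ over $I$, we get $v_i(s_{h^*-1}) \le v_i(s^*) \ge v_i(s_{h^*+1})$, so $s^*$ is a local maximum along the ordering. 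Because $I \supseteq \Ide$ is a wide interval of sizes containing $s^*$ in its interior, the two sizes \emph{immediately} adjacent to $s^*$ in the permutation also lie in $I$, so there is no ``gap'' next to $s^*$ in which the true global peak could hide. Unimodality then forces the global peak of $i$ to be exactly $s^*$, whence $v_i(t) \le v_i(s^*)$ for every $t \in [n]$, in particular for $t = \modulus{C}$, contradicting $v_i(\modulus{C}) > v_i(s^*)$.

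The main obstacle is precisely this last structural point: guaranteeing that $s^*$ is flanked on both sides, in the single-peaked ordering, by sizes retained in $I$, so that $s^*$ is a genuine local — and hence global — maximum rather than merely an $I$-optimal endpoint. This is where I would lean on $\Ide \subseteq I$ together with the fact that $\Ide$ is an interval of consecutive sizes, and on the median definition of $h^*$, which keeps $s^*$ away from the extreme positions of $I$. The delicate sub-case is the boundary one, where $s^*$ could be an endpoint of $I$ and an agent's true peak could sit just outside $I$; this must be discharged by verifying that such an $s^*$ cannot simultaneously be the $I$-peak of an agent whose true peak lies outside $I$ (equivalently, that the adjacent size is still in $I$). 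A secondary care point is that with weakly single-peaked valuations a plateau could make a local maximum fail to be global, so one either invokes strictness or argues that any tie still yields $v_i(\modulus{C}) \le v_i(s^*)$, which already suffices to contradict the strict improvement required for core-blocking. Once global optimality of $s^*$ for every $i \in E$ is secured, the conclusion is immediate and, unlike the general anonymous case, needs no restriction on $\modulus{C}$.
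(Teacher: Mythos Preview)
The paper's proof is a single sentence: agents in $E'$ sit in a coalition of size $s^*$, which is their peak restricted to $I$, so no coalition whose size lies in $I$ can improve them. That is exactly the argument in your first paragraph, and it is also the only form of the lemma the paper ever uses: in the proof of Theorem~\ref{thm:SPanonymous} the two lemmas are invoked solely for blocking coalitions with $c=\modulus{C}\in I$. So your first paragraph already reproduces the paper's proof.

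Your second and third paragraphs try to establish the unrestricted statement by upgrading ``$s^*$ is $i$'s peak in $I$'' to ``$s^*$ is $i$'s global peak''. That step does not go through. Two concrete problems. First, $\Ide$ and $I$ are intervals in the \emph{natural} ordering of $[n]$, whereas single-peakedness is with respect to an arbitrary permutation $(s_1,\dots,s_n)$; there is no reason the two sizes immediately adjacent to $s^*$ in that permutation should lie in $I$, so your ``no gap next to $s^*$'' claim is unfounded. Second, the choice of $h^*$ does not keep $s^*$ away from the extreme positions of $I$: if every agent's $I$-peak is $s_1$ then $h^*=1$, so $s^*$ is an endpoint of $I$ in the single-peaked order. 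In fact the unrestricted lemma can fail outright---an agent in $E'$ may have her true peak outside $I$ and belong to a blocking coalition of that outside size---so the paper's lemma is really only correct (and only needed) under the additional hypothesis $\modulus{C}\in I$, which the companion lemma states explicitly. Your overreach is understandable given how the lemma is phrased, but the intended content is just your first paragraph.
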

\begin{proof}
Each agent in $E'$ is in the most preferred coalition.
\end{proof}

\begin{lemma}
For any core-blocking coalition $C$ for $\pi$ having size $c\in I$,it must hold $L'\cap C=\emptyset \vee G'\cap C=\emptyset$.   
\end{lemma}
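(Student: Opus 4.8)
The plan is to argue by contradiction, exploiting the fact that the restriction of the single-peaked preferences to $I$ is still single-peaked with respect to the ordering $s_1,\dots,s_k$. First I would use the hypothesis $c\in I$ to write $c=s_m$ for some index $m\in[k]$, and assume for contradiction that $C$ contains both some $i\in L'$ and some $j\in G'$. Since $i,j\in N'$, both sit in a coalition of size $s^*=s_{h^*}$ in $\pi$, so by anonymity the fact that $C$ core-blocks $\pi$ is exactly the pair of strict inequalities $v_i(s_m)>v_i(s_{h^*})$ and $v_j(s_m)>v_j(s_{h^*})$.

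The core of the argument is to extract contradictory constraints on the index $m$ from these two inequalities. For $i$, being in $L'$ means the index $\ell_i$ of $i$'s peak within $I$ satisfies $\ell_i<h^*$; applying the single-peaked condition of Section~\ref{sec:preliminaries} to the two indices $m$ and $h^*$ on the side of the peak with the larger indices (legitimate since $h^*>\ell_i$), I would conclude that $m\ge h^*$ forces $v_i(s_{h^*})\ge v_i(s_m)$, contradicting the strict gain of $i$ — and $m=h^*$ is excluded outright because then $v_i(s_m)=v_i(s_{h^*})$. Hence $m<h^*$. The mirror argument for $j\in G'$, whose peak index exceeds $h^*$, uses the other half of the single-peaked condition and yields $m>h^*$. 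Together these give the impossible chain $m<h^*<m$, so $C$ cannot meet both $L'$ and $G'$, i.e.\ $L'\cap C=\emptyset$ or $G'\cap C=\emptyset$.

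I expect the only delicate point to be the bookkeeping: correctly matching the indices $m,h^*,\ell_i,\ell_j$ to the two cases of the single-peaked definition, and remembering that core-blocking demands strict improvements, which is precisely what rules out the borderline case $m=h^*$. No probabilistic estimates enter here, and the reasoning is unaffected by possible ties among the valuations.
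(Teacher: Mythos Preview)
Your argument is correct and is essentially the same as the paper's, just spelled out in more detail: the paper compresses your index computation into the single observation that the sizes in $I$ strictly preferred to $s^*$ by an agent in $L'$ are exactly those strictly dispreferred to $s^*$ by an agent in $G'$, so no single size $c\in I$ can satisfy both. Your careful handling of the borderline case $m=h^*$ and the explicit matching of indices to the two halves of the single-peaked definition make the reasoning watertight.
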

\begin{proof}
    Consider $i,i'$ in $L'$ and $G'$, respectively. The sizes that are better than $s^*$ for $i$ in $I$ are the ones that are worse than $s^*$ for $i'$.  Therefore, $i$ and $i'$ cannot be both in $C$.
\end{proof}

\begin{proof}[Proof of Theorem~\ref{thm:SPanonymous}]
    First we will show that  $l'+e'\geq n/4 \,\vee\,g'+e'\geq n/4$ must hold. Indeed, by definition, $l'+g'+e'= n-r\geq n/2$. If $l'+e'\leq n/4$, then $g'+ e' \geq g'\geq n/4$; similarly, if  $g'+e'\leq n/4$, then $l'+ e' \geq l'\geq n/4$.
    
    Let us estimate how many core-blocking coalitions $C$ with sizes $c \in I$ do exist. By the previous lemmas, all $C$ of this kind do not contain agents from $E'$, and contain agents either from $L'$ or from $G'$ but not from both. Therefore, the number of possible core-blocking coalitions with size in $I$ satisfies:
    \begin{align*}
    \modulus{\{C \text{ core-blocks } \pi\wedge c \in I\}} &\le 
        2^{n-e'-l'} + 2^{n-e'-g'} \\ &\le 2\cdot\max\set{2^{n-e'-l'}, 2^{n-e'-g'}} \\
        &\le 2^{\frac{3n}{4}+1} \; ,
    \end{align*}
where the last inequality holds by the observation above. Let us set $a=2^{\frac{3n}{4}+1}/2^n= 2^{1-\frac{n}{4}}$.

In conclusion, the probability of sampling a core blocking coalition is given by 
    \begin{align*}
        \Prob{C \text{ core-blocks } \pi}{C\sim\distribution} &=
         \Prob{C \text{ core-blocks } \pi \wedge c \in I}{C\sim\distribution}
         +\Prob{C \text{ core-blocks } \pi \wedge c \not\in I}{C\sim\distribution}\\
         &\leq \Prob{C \text{ core-blocks } \pi \wedge c \in I}{C\sim\distribution}
         +\Prob{c \not\in I}{C\sim\distribution} \\
         &\leq  \frac{\lambda a}{\lambda a+ 1 -a} +  \eps/2 \le \eps \; ,
    \end{align*}
    where the second to last inequality holds true, with confidence $1-\delta$, because of Lemma~\ref{lem:bartlett}
    and the last inequality holds for $\eps \ge 4\cdot\frac{\lambda}{2^{\frac{n}{4}}}\ge \frac{4\lambda}{2^{\frac{n}{4}} + 2(\lambda-1)} = \frac{2\lambda a}{\lambda a+ 1 -a}$.
\end{proof}
\end{document}